\newtheorem{theorem}{\bf Theorem}
\newenvironment{proof}{\par \noindent {\bf Proof: }}{\begin{flushright}
$\Box$\end{flushright}\par \noindent}
\newtheorem{definition}[theorem]{\bf Definition}
\newtheorem{remark}[theorem]{\bf Remark}
\newtheorem{lemma}[theorem]{\bf Lemma}
\newtheorem{proposition}[theorem]{\bf Proposition}
\newtheorem{example}[theorem]{\bf Example}
 \DeclareOldFontCommand{\bf}{\normalfont\bfseries}{\mathbf}
\newcommand{\pol}{\sqsubseteq}
\newcommand{\qed}{\nobreak \ifvmode \relax \else
      \ifdim\lastskip<1.5em \hskip-\lastskip
      \hskip1.5em plus0em minus0.5em \fi \nobreak
      \vrule height0.75em width0.5em depth0.25em\fi}
\newcounter{nodeidx}
\newcommand{\mypm}{\mathbin{\mathpalette\@mypm\relax}}
\newcommand{\@mypm}[2]{\ooalign{%
  \raisebox{.1\height}{$#1+$}\cr
  \smash{\raisebox{-.6\height}{$#1-$}}\cr}}
\newcommand{\para}{\parallel}
\begin{document}
\title{Entropy conservation for comparison-based algorithms\footnote{This work was completed during a Fulbright Scholarship (April 5 - Aug 31, 2019) at Stanford's Computer Science theory group, research host D. Knuth. The author is grateful for discussions with D. Knuth, V. Pratt and M. Fiore.} }
\author{M. P. Schellekens\footnote{Associate Professor, University College Cork, Department of Computer Science, Email: m.schellekens@cs.ucc.ie} }
\date{}
\maketitle
\begin{abstract} Comparison-based algorithms are algorithms for which the execution of each operation is solely based on the outcome of a series of comparisons between elements \cite{knu}. Typical examples include most sorting algorithms\footnote{Such as Bubblesort, Insertionsort, Quicksort, Mergesort, \ldots \cite{knu}}, search algorithms\footnote{Such as Quickselect \cite{knu}}, and more general algorithms such as Heapify which constructs a heap data structure from an input list \cite{knu}. \emph{Comparison-based computations can be naturally represented via the following computational model} \cite{sch1}: (a) model data structures as partially-ordered finite sets; (b) model data on these by topological
sorts\footnote{We use the computer science terminology for this notion. In mathematics the notion of a topological sort is referred to as a \emph{linear extension} of a partial order.}; (c) considering computation states as finite multisets of such data;
(d) represent computations by their induced transformations on states. 

In this view, an abstract specification of a sorting algorithm has input state given by any possible permutation of a finite set of elements (represented, according to (a) and (b), by a discrete partially-ordered set
together with its topological sorts given by all permutations) and output
state a sorted list of elements (represented, again according to (a) and (b),
by a linearly-ordered finite set with its unique topological sort). 

Entropy is a measure of ``randomness'' or ``disorder.'' Based on the computational model, we introduce an entropy conservation result for comparison-based algorithms: \emph{``quantitative order gained is proportional to positional order lost.''} Intuitively, the result bears some relation to the messy office argument advocating a chaotic office where nothing is in the right place yet each item's place is known to the owner, over the case where each item is stored in the right order and yet the owner can no longer locate the items. Formally, we generalize the result to the class of data structures representable via series-parallel partial orders--a well-known computationally tractable class \cite{moh}. The resulting ``denotational" version of entropy conservation will be extended in follow-up work to an ``operational" version for a core part of our computational model. 
 \end{abstract}

\section{Introduction}
Our work investigates properties of functions arising in computation \emph{when} complexity is taken into account. For traditional denotational semantics the main property of input-output functions is that of Scott-continuity \cite{stoy}. When studying semantics of programming languages that are required to reflect the meaning of programs (intuitively the input-output relation) \emph{and} also the complexity (the efficiency measured in running-time), more refined models are required. Such models have been studied in quantitative domain theory \cite{sch4}. The story however is far from over, since even though the theory of quantitative domains has matured at the model level, the situation at the programming language level is quite different from traditional semantics. Traditional denotational semantics relies on the \emph{compositionality} of the determination of meaning\footnote{The meaning of the sequential execution of two programs is the functional composition of their meanings.}. The complexity measure of worst-case running time is inherently non-compositional.  The complexity measure of average-case time is compositional \cite{sch1} but does not support a \emph{computation} of the compositional outcomes of complexities since input-distributions cannot, in general, be feasibly tracked throughout computations. This has led to the development of a special purpose programming language MOQA supporting a compositional determination of average-case time \cite{sch1}. A key aspect of MOQA is that its operations support ``global state preservation", which in turns guarantees a compositional determination of the average-case complexity of MOQA-algorithms. MOQA-algorithms are comparison-based. Here we continue the investigation of global state preservation and show that for the general case of comparison-based algorithms this notion can be refined to a novel notion of entropy conservation. 

\section{Basic notions}
\subsection{Orders, data structures and sorting}
We assume familiarity with the standard notion of a \emph{partial order}, including related concepts such as extremal elements (minimal and maximal elements), a Hasse diagram, a topological sort (aka linear extension of a partial order), and a linear (or total) order. 

\emph{Partial orders are implicitly assumed to be finite} unless they are clear from the context to be infinite, as is the case for the standard linear order over the set of integers. Figures displaying partial orders use a Hasse diagram representation of the transitive-reflexive reduction of the order. Partial orders are denoted by a pair $(X,\sqsubseteq)$ or using Greek letters, $\alpha,\beta, \ldots$ The size $|\alpha|$ of a partial order $\alpha = (X,\sqsubseteq)$ is the cardinality of the underlying set $X$. Specific partial orders are denoted by capital Greek or Roman letters. $\Delta_n$ denotes a discrete order of size $n$ and $L_n$ denotes a linear order of size $n$. The underlying finite set $X$ of a partial order $(X,\sqsubseteq)$ of size $n$ is typically enumerated as $X = \{x_1,\ldots,x_n\}$, using indices $i$ to enumerate elements $x_i$ of the set $X$. 

We will represent data structures via finite partial orders and adopt the following convention: \emph{elements of data structures, such as, say, lists, are enumerated starting from position 1 rather than from position 0} (as would be customary in computer science).

We assume familiarity with basic \emph{data structures} such as a tree, a complete binary tree (in which each parent node has exactly two children and the leaves all have the same path-length counting form the root), and the heap data structure (a complete binary tree-structure possibly with some leaves removed in right-to-left order and labelled with integers such that each parent label is larger than the label(s) of its child(ren)). We also assume familiarity with the notion of a \emph{comparison-based algorithm} \cite{knu}.


\emph{Multisets} are set-like structures in which order plays no role but in which, contrary to sets, duplicate elements are allowed and accounted for via multiplicities. 


Finally, we recall that the \emph{complement of a graph} $G_1$ is a graph $G_2$ on the same vertices such that two distinct vertices of $G_2$ are adjacent if and only if they are not adjacent in $G_1$.


\subsection{Topological sorts, state space and root states}
We recall that partial orders are considered to be finite in this presentation unless otherwise stated.



%


\begin{definition} \label{topologicalsortdefinition} Given a (finite) partial order $\alpha$ and a linearly ordered countable set $(\mathcal{L},\leq)$, referred to as the \textbf{label set}. A \textbf{labeling} $l$ of the order $\alpha$ is an increasing injection from $\alpha$ into $(\mathcal{L},\leq)$. Note that by definition, labelings never use duplicate labels, i.e. repeated labels. This corresponds to the standard assumption in algorithmic time analysis where, to simplify the analysis, the data, such as lists, are assumed to have distinct elements\footnote{Repeated labels can be catered for in an analysis. The details are technical \cite{sch1}.}. We adopt this convention in our computational model. 

A \textbf{topological sort} of a finite partial order $\alpha$ is a pair $(\alpha,l)$ consisting of the partial order $\alpha$ and a \textbf{labeling} $l$. $Top_{\mathcal{L}}(\alpha)$ denotes the set of all topological sorts $(\alpha,l)$ such that $range(l) \subseteq \mathcal{L}$. In other words, $Top_{\mathcal{L}}(\alpha)$ is the set of all topological sorts using labels from the given label set $\mathcal{L}$. In examples we will typically take $\mathcal{L}$ to be the positive integers, but the set could be any countable linear order, e.g. the words of the English alphabet equipped with the lexicographical order. 

Note that \textbf{a labeling $l$ of a topological sort $(\alpha,l)$} where $\alpha = (X, \sqsubseteq)$ and $X = \{x_1,\ldots,x_n\}$ \textbf{is determined by a permutation $\sigma$} on $\{1,\ldots,n\}$. The arguments of $\sigma$ are the indices of the elements $x_i$ and the values $\sigma$ takes are the ranks of the labels $l(x_i)$ (taken in the range of $l$). For instance, the topological sort over the linear order on the set $X = \{x_1,x_2,x_3\}$ determined by the labeling $l$ taking the values $l(x_1) = 5, l(x_2) = 2, l(x_3) = 7$ is the permutation $\sigma = {2 \,\, 1 \,\, 3 \choose 1 \,\,  2  \,\, 3  }$. 


\end{definition}

Figure \ref{Fig: topological-sort-example-alt} displays four topological sorts, marked I, II, III and IV, for a partial order that has a Hasse diagram forming a binary tree of size 4. The label set $\mathcal{L}$ is the set of positive integers. The four topological sorts are examples of \emph{heap data structures} \cite{knu}. 
\begin{definition}
\label{Def:LPOIso}
Two topological sorts $(\alpha,l_{1})$ and $(\alpha,l_2)$ are \textbf{isomorphic} exactly when for all $x, y\in X$, $l_{1}(x) \leq l_{1}(y)$ if and only if $l_{2}(x)\leq l_{2}(y).$ In other words \textbf{the labels of the topological sort share the same relative order}. For instance, consider the two topological sorts determined by the lablelings $(x_1: 5,x_2: 3, x_3: 1)$ and $(x_1: 4,x_2: 2, x_3: 0)$ of the discrete order $\Delta_3$ of size $3$ over the set $X = \{x_1,x_2,x_3\}$. The topological sorts are isomorphic and represent two unordered (reverse sorted) lists of size $3$. \textbf{Equivalently,} the topological sorts $(\alpha,l_{1})$ and $(\alpha,l_2)$ are isomorphic when: for all $x\in X$, the rank of $l_{1}(x)$ in the range $l_1(X)$ is equal to the rank of $l_{2}(x)$ in the range $l_2(X)$. In other words, \textbf{across topological sorts, labels of the same element must have identical rank}. Given a topological sort $(\alpha,l)$, where $\alpha = (X,\sqsubseteq)$, then its \textbf{root state} $l'$ is obtained by replacing each label of $l$ by its rank in $l(X)$. Root states are exactly the non-ismorphic topological sorts over an order $\alpha$ of size $|\alpha| = n$ that use labels from the set $\{1, \ldots, n\}$ only. \textbf{Two topological sorts} hence \textbf{are equivalent iff they share the same root state.}  The root state of the isomorphic topological sorts $(x_1: 5,x_2: 3, x_3: 1)$ and $(x_1: 4,x_2: 2, x_3: 0)$ of the discrete order $\Delta_3$ of size $3$ is the permutation-labeling $(x_1: 3,x_2: 2, x_3: 1)$. Root states are labelings that can be identified with the permutations that determine the labeling. 
\end{definition}

Figure \ref{Fig: topological-sort-example-alt} displays four topological sorts, I, II, III and IV, two of which are isomorphic (I and II). Their root state are illustrated via the topological sorts V, VI and VII, where the isomorphic topological sorts I and II share the same root state V. 

\begin{definition} The topological sorts of a finite partial order $\alpha$ are identified up to labeling isomorphism. The resulting quotient, denoted $R(\alpha)$, is called the \textbf{state space} of the partial order. With abuse of notation we denote elements of a state space by canonical \emph{representatives} of these equivalence classes (as opposed to the equivalence classes): given a partial order $\alpha$ of size $n$, then its \emph{state space} $R(\alpha)$ consists of the finitely many \emph{root states} of topological sorts over the order, which represent the finitely many states non-isomorphic topological sorts can occur in. 
\end{definition}

\begin{example}\label{state space} (Root states and state space) The label set $\mathcal{L}$ is the set of positive integers. 

\noindent {\bf a) State space representing unordered lists of size 4} \\
The discrete order implies no conditions on labels of its topological sorts. For a discrete order $\Delta_n$ of size $n$, the state space $R(\Delta_n)$ hence corresponds to the set of $n!$ permutations of size $n$. 

Consider the case of a discrete order $(X,\Delta_3)$ over a set $X = \{x_1,x_2,x_3\}$. The state space $R(\Delta_3)$ consists of the root states, i.e. the topological sorts using labels from the set $\{1,2,3\}$ only, given by the following labelings determining each such topological sort: \begin{center} $(x_1: 1, x_2: 2, x_3: 3), (x_1: 1, x_2: 3, x_3: 2), (x_2: 2, x_2: 1, x_3: 3),$ \\
$(x_1: 2, x_2: 3, x_3: 1), (x_1: 3, x_2: 1, x_3: 2), (x_1: 3, x_2: 2, x_3: 1).$ \end{center} In other words, the state space $R(X,\Delta_3)$ corresponds to the 3! permutations of size 3: \\
$\{(1,2,3),(1,3,2),(2,1,3),(2,3,1),(3,1,2),(3,2,1)\},$ representing the unordered lists of size 3. \\

\noindent {\bf b) State space representing a heap data structure of size 4} \\
Consider the four topological sorts I, II, III and IV over the order determined by the Hasse Diagram in Figure \ref{Fig: topological-sort-example-alt} and their root states V, VI and VII. These topological sorts V, VI and VII happen to form the only possible root states for this order. The set $\{V,VI,VII\}$ forms the state space of this order, representing exactly the distinct (root) states that heap data structures of size 4 can occur in. 

\begin{figure}[h]
\centering
\includegraphics[height= 7cm]{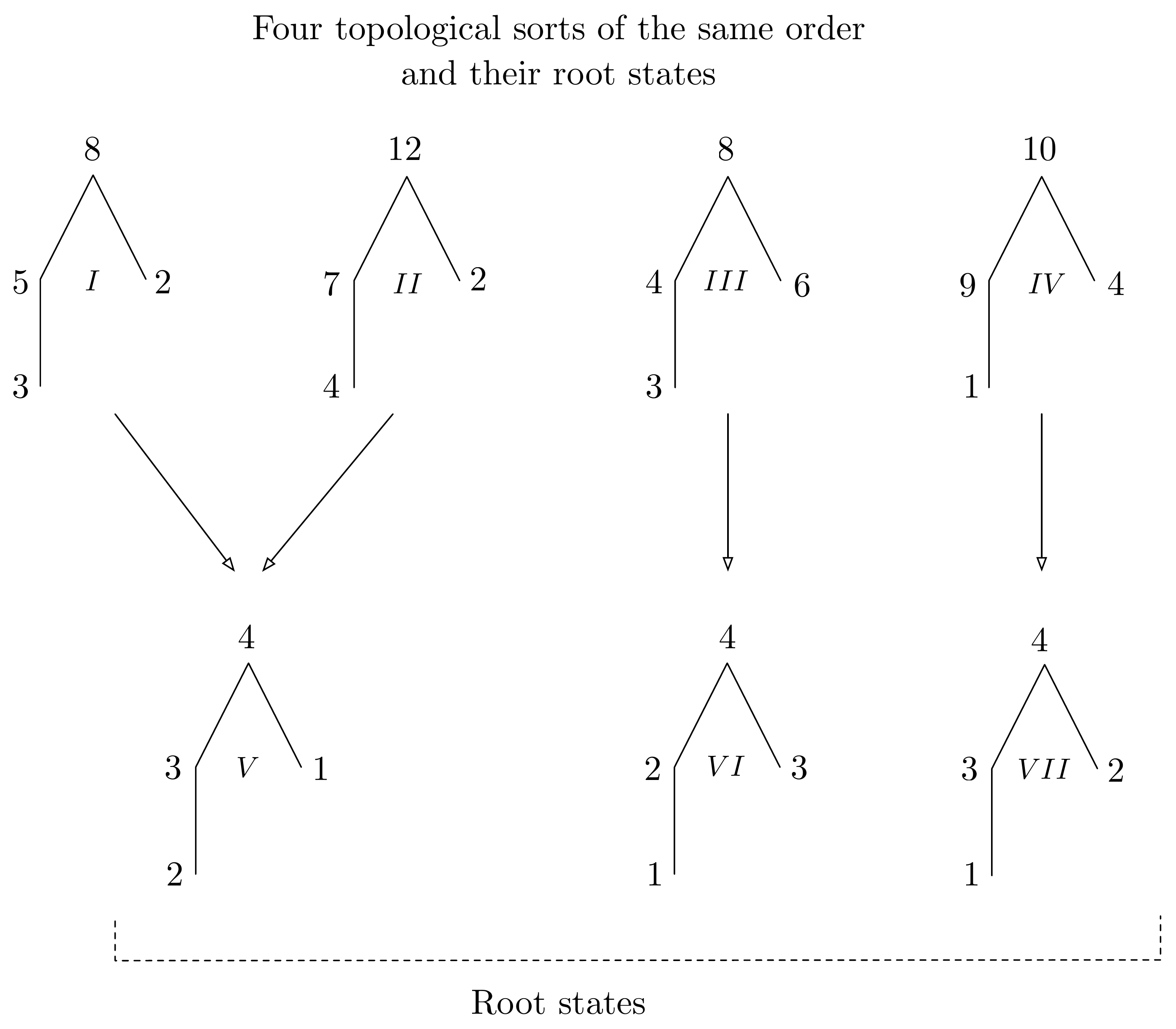}
\caption{Four topological sorts I, II, III, and IV over the same order. Topological sorts I and II are isomorphic and hence share the same root state V.  \label{Fig: topological-sort-example-alt}}
\end{figure}
\end{example}

\begin{definition} A \textbf{global state} $R$ is a finite multiset of state spaces, i.e. $R$ is of the form: $$\{(R(\alpha_1),K_1),\ldots,(R(\alpha_l),K_l)\}$$ \end{definition}
The orders $\alpha_1,\ldots, \alpha_k$ indicate that the input data structure has been transformed to several output data structures, represented by different orders\footnote{\cite{sch1} provides examples of transformations leading to different orders, e.g. Quicksort's split operation.}. Each of the state spaces $R(\alpha_i)$ reflects that output states, when they are topological sorts of the given order $\alpha_i$, have root states over $\alpha_i$. 


\subsection{The four-part model}
As indicated in the abstract, the computational model for modular time analysis of comparison-based algorithms \cite{sch1,sch3} consists of four parts: 

\begin{itemize}\item{(a) modelling data structures as
partially-ordered finite sets;} \vspace*{-1.5 mm}
\item{(b) modelling data on these by topological
sorts;}\vspace*{-1.5 mm}
\item{(c) considering computation states as finite multisets of such data (aka ``global states'');}\vspace*{-1.5 mm}
\item{(d) analysing algorithms by their induced transformations on global states.}
\end{itemize}

In this view, an abstract specification of a sorting algorithm has input state given by any possible permutation of a finite set of elements (represented, according to (a) and (b), by a discrete partially-ordered set
together with its topological sorts given by all permutations) and output
state a sorted list of elements (represented, again according to (a) and (b),
by a linearly-ordered finite set with its unique topological sort).

\begin{example} The (unordered) input lists of size 2 of a sorting algorithm are modelled by the topological sorts of a discrete order of size 2 over the set of elements $\{x_1,x_2\}$: $$\{(x_1: 1,x_2:2),(x_1:2,x_2:1)\}$$

\noindent The function values $s(x_1) = 2$ and $s(x_2) = 1$ of, say, the topological sort $s = (x_1:2,x_2:1)$ (over the discrete order of size 2) are referred to as \emph{labels} and correspond, for the case of list data structure, to the list's elements. The location $i$ of a label $a$ for which $s(x_i) = a$ is referred to as the label's \emph{index} and corresponds to the location of an element in a list. The list $(2,1)$ contains the element 2 in position 1 (i.e., the index of $x_1$) and the element 1 in position 2 (i.e. the index of $x_2$). We say that the index of label 2 is 1 and the index of label 1 is 2 for this topological sort. The 2! topological sorts $\{(x_1: 1,x_2:2),(x_1:2,x_2:1)\}$ consist of 2 permutations representing the unordered lists $(1,2)$ and $(2,1)$. These topological sorts form the \emph{``root states''} that lists of size 2 (with distinct elements) can occur in. Indeed, a list of size 2 is either sorted, represented by $(1,2)$, or reverse sorted, represented by $(2,1)$. Together, these topological sorts form a set referred to as the \emph{``state space''}\footnote{A state space intuitively serves to represent the uniform distribution over the data: each of the infinitely many possible input lists $(a,b)$ of size 2 (with distinct elements) is assumed to occur with equal probability in one of the two root states of the state space. This interpretation serves to underpin the complexity analysis of algorithms, which is the topic of \cite{sch1} and will not be considered here.}. 
\end{example}

\begin{example} \textbf{Trivial sort of lists of size 2} \label{trivial} \\
Computations will transform topological sorts to new topological sorts. All computations will be based on comparisons. For instance, a sorting algorithm, be it a very primitive one that operates only over lists of size 2, can execute a single comparison of the two elements of the list (the labels of the corresponding topological sort), followed by a swap in case the labels are out of order. Such an algorithm leaves the topological sort $(x_1: 1,x_2:2)$ unchanged and transform the topological sort $(x_1:2,x_2:1)$ via a single swap to the topological sort $(x_1:1,x_2:2)$. Sorting, in this model, produces the unique topological sort over the \emph{linear} order, as illustrated in Figure \ref{Fig:sorting-lists-size-two}.

\begin{figure}[h]
\centering
\includegraphics[height=3.4cm]{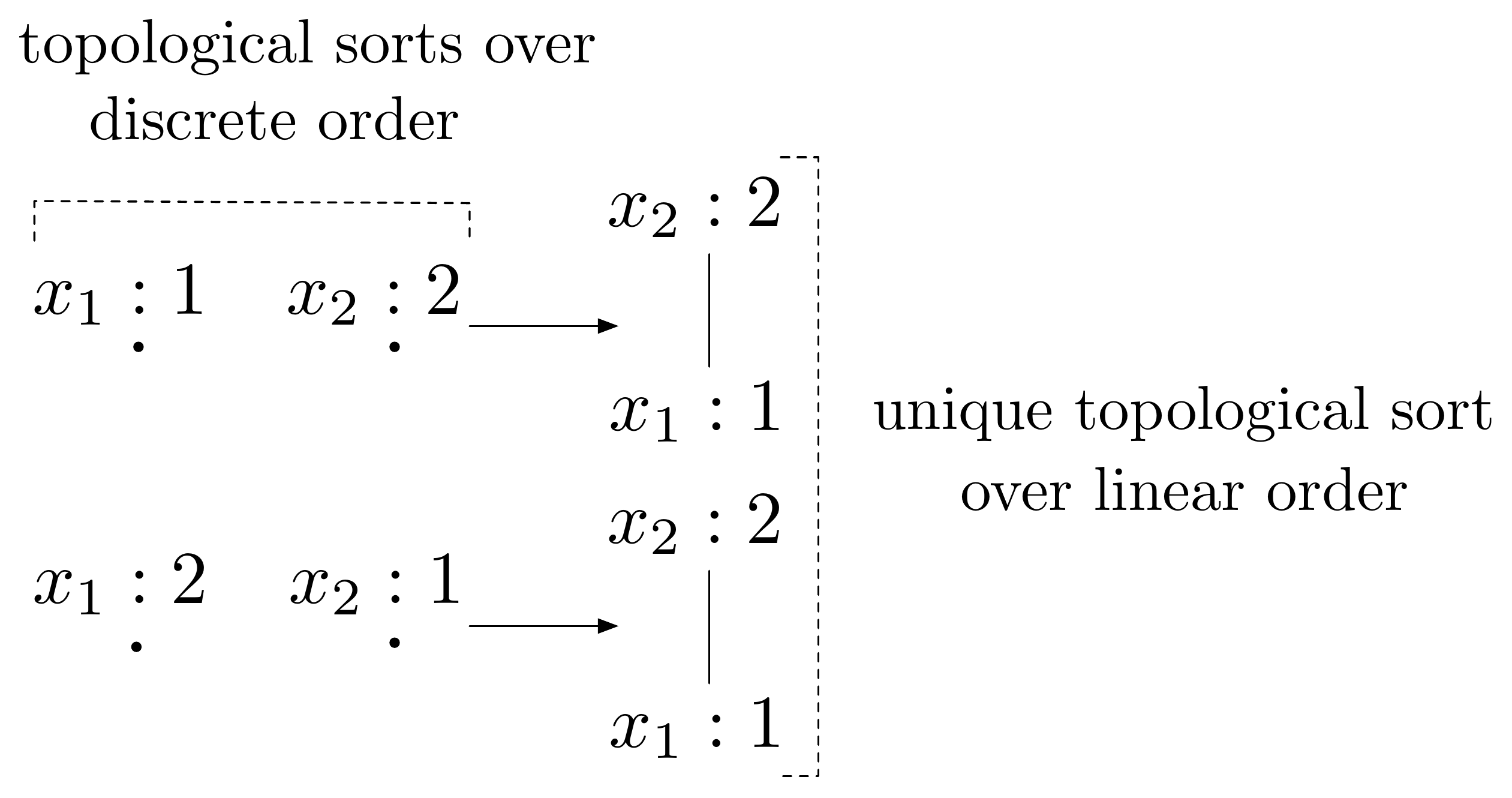}
\caption{Sorting: transforming a topological sort of the \emph{discrete order} into the unique topological sort of the \emph{linear order} for lists of size 2 \label{Fig:sorting-lists-size-two}}
\end{figure}

The transformation changes the multiset of the input state space (over the discrete order) $$\{(\{(x_1: 1,x_2:2),(x_1:2,x_2:1)\},1)\}$$ to the multiset of the output state space (over the linear order) $\{(\{(x_1: 1,x_2:2)\},2)\}.$ \\
Such multisets are referred to as ``global states" of the data under consideration.

\end{example}


We focus on the particular case of comparison-based sorting algorithms to illustrate our model\footnote{Note that the partial orders model \emph{implicit} data structures. Readers who wish to focus on the mathematical presentation, as opposed to implementation details, are advised to skip this comment on first reading. Part (a) of the model description stipulates that we use finite partial orders to represent ``data structures". This order may be \emph{implicitly} or \emph{explicitly} represented in the output data structure depending on the implementation. For instance, in the case of a heap-formation from a unordered input list, the computation can establish the heap explicitly by transforming the input list into a binary tree data structure satisfying the heap property, and constitutes a de facto heap. Alternatively, the algorithm may take an input list and retain the list data structure for its outputs. Elements of the input list will be reorganized \emph{in place}, i.e. a new list will be produced, for which the elements \emph{satisfy} a heap structure. The tree-structure underlying this heap remains an ``implicit" part of the implementation. For all purposes the algorithm makes use of the heap-structure intended by the programmer, but the data structure remains a list at all times during the computation. This is for instance the case for the ``Heapify" process in traditional (in-place) Heapsort \cite{knu}. We refer to the heap-structure in that case as the ``\emph{implicit data structure}'' and the list data structure as the ``\emph{explicit data structure}''. These may coincide or not depending on the implementation. In our context, \emph{partial orders model the \emph{implicit} data structure}. }.

 
\subsection{A basic example of the computational model: sorting algorithms}
We illustrate (a), (b), (c), and (d) of the model for a sorting algorithm operating over lists of size $n$. 

\subsubsection{Orders and topological sorts, parts (a) and (b)} 
For sorting algorithms, inputs are list data structures, represented by finite discrete orders. The elements of the order are labelled with positive integers drawn from a linearly ordered label set $\mathcal{L}$, which in this case is the usual linear order on the positive integers. The only requirement on this labeling is that its combination with the discrete order forms a topological sort\footnote{It is possible to deal with lists that have repeated elements. These would need to be modelled by topological sorts for which conditions are relaxed to allow for repeated labels. See \cite{sch1, ear1} for a discussion of how repeated labels can be handled through the assignment of random tie-breakers. It is standard practice in algorithmic analysis to undertake the analysis in first instance for lists \emph{without} duplicate elements--an approach adopted here.}. 

Sorting algorithms hence transform topological sorts of the \emph{discrete order (permutations)} into a unique\footnote{``Unique" in the sense of topological sorts of the linear order using the same labels as the input permutation.} topologic sort of the \emph{linear order (the sorted list)}. The transformation of the list $(9,6,3,2)$ into the sorted list $(2,3,6,9)$ by a sorting algorithm is represented in Figure \ref{Fig:sorting-example}. $(2,3,6,9)$ forms the unique topological sort of the linear order (using the labels 2, 3, 6 and 9 under consideration).

\begin{figure}[h]
\centering
\includegraphics[height=2.7cm]{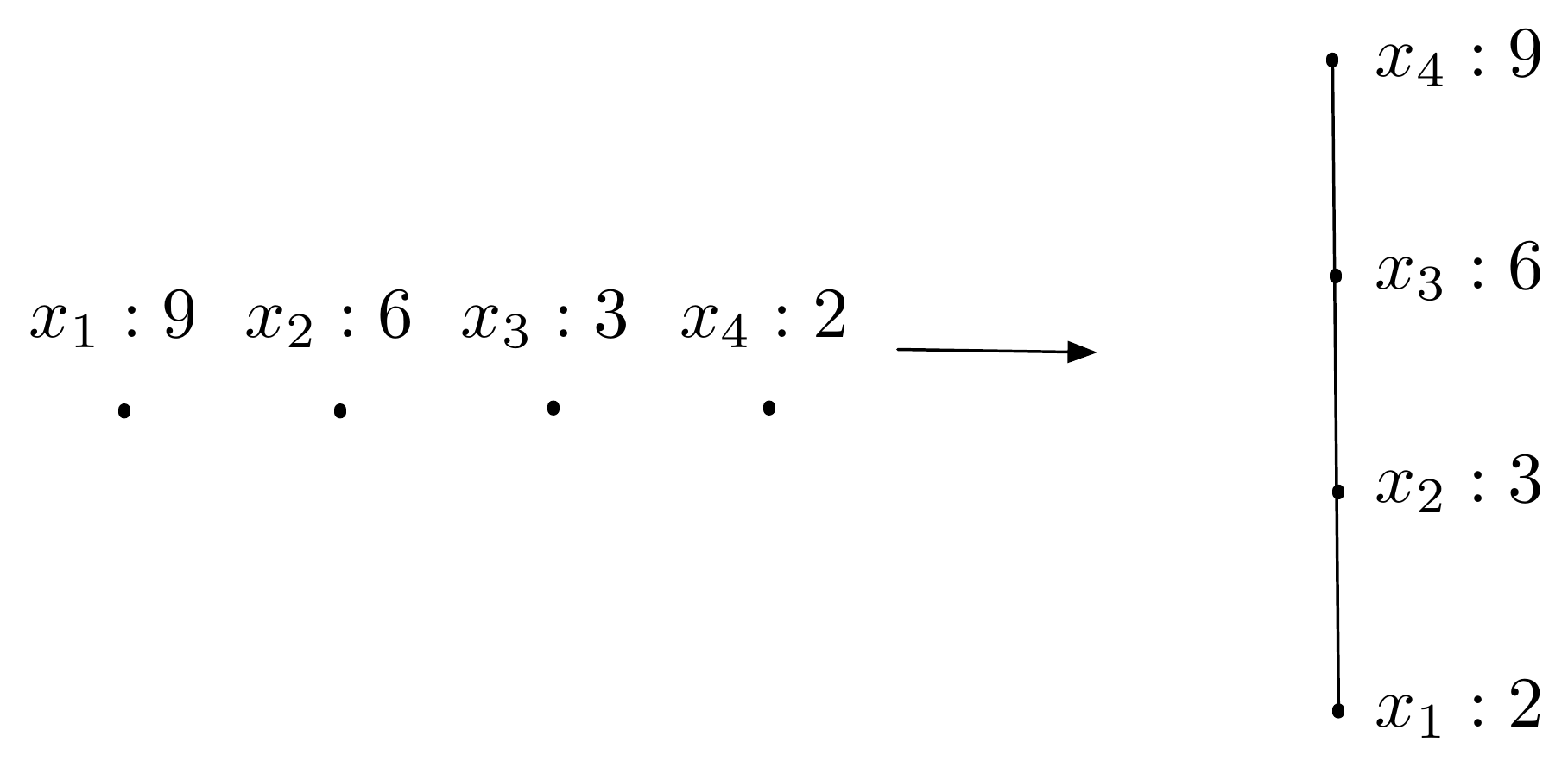}
\caption{Sorting: transforming a topological sort of the \emph{discrete order} into the unique topological sort of the \emph{linear order} \label{Fig:sorting-example}}
\end{figure}
\subsubsection{Global state, part (c)} 
Every list of size n, after identification up to isomorphism with a root state, corresponds to one of the n! permutations of size n. The corresponding state space $R(\Delta_{n}) = \{\sigma_1, \ldots, \sigma_{n!}\}$ consists of the root states, represented as permutations in this case. The multiset $\{(R(\Delta_{n}),1)\}$ containing a single copy of the state space $R(\Delta_{n}) = \{\sigma_1, \ldots, \sigma_{n!}\}$, forms the \emph{(global) state} of the discrete order of size $n$. This global state intuitively represents the possible inputs for the sorting algorithm. 

\subsubsection{Induced transformations on global states, part (d)}
Consider the root states of the discrete order $\Delta_n$ of size $n$, corresponding to $n!$ permutations of size $n$, forming the state space $R(\Delta_{n})$. The (global) state of the discrete order of size $n$ is the multiset $\{(R(\Delta_n),1)\}$, representing the inputs of our algorithm. Every sorting algorithm transforms the root states of this global state into $n!$ copies of the state space of the linear order $L_n$, consisting of a unique root state (a topological sort corresponding to the sorted list).  We obtain the following result. \\

\noindent \textbf{Global state preservation for sorting} \\
Comparison-based sorting algorithms, for inputs of size $n$, transform the global state $\{(R(\Delta_n),1)\}$ into the global state $\{(R(L_n),n!)\}$. 


\subsubsection{Global state preservation: a word of caution} \label{caution1}
We have established the first obvious fact: all comparison-based sorting algorithms preserve global states. Note however that, even though \emph{every comparison-based sorting algorithm} can be naturally interpreted to induce a transformation on global states, this does not entail that \emph{every operation} used in a comparison-based algorithm preserves global states  (cf. \cite{sch1})\footnote{In a sense, it is counter-intuitive that the whole, i.e. a comparison-based sorting algorithm, satisfies the property while some of its operations may not. Global state preservation for all operations is a crucial requirement for feasible modular time analysis: \emph{the analysis of comparison-based algorithms is guaranteed to be feasibly modular, in case \emph{every} operation of the computation preserves global states}  \cite{sch1, sch3}\footnote{In which (global) states are referred to as ``random bags".}. A breakdown of global state preservation for one or more operations lies at the heart of open problems in algorithmic analysis \cite{sch1}. A basic example for which global state preservation breaks down is provided by Heapsort's Selection Phase \cite{sch1}, for which the \emph{exact} time is an open problem \cite{knu}.}. The property of global state preservation can be refined to (global) entropy conservation, motivated in the next section. 

\section{Entropy conservation}
Entropy considerations naturally arise in the context of comparison-based algorithms, e.g. via the well-known $\Omega(log_2(n!))$ lower bound for both the worst-case and average-case time of comparison-based algorithms (on inputs of size $n$) \cite{knu}. For the case of unordered lists of size $n$, the entropy of the input data is $log_2(n!)$. \emph{This notion of entropy will be generalized to the context of topological sorts in a natural way, by measuring the log in base 2 of the number of topological sorts of a finite order.} Our investigation of entropy and its conservation is carried out for computation \emph{with history} (see also \cite{knu}). 

\subsection{Computation with history over topological sorts}  \label{history}
Comparison-based computation typically executes swaps of elements based on comparisons, generalizing the case of comparison-based sorting. In computations \emph{with history}, the original index $\textcolor{red}{i}$ of each label $\textcolor{blue}{a}$ in the input data (topological sort) is paired with the label $\textcolor{blue}{a}$ to form an index-label pair $(\textcolor{red}{i},\textcolor{blue}{a})$. Such a pair replaces each label $a$ in the computation. I.e., instead of exchanging \emph{labels}, a computation with history exchanges \emph{index-label pairs}. Note that the comparisons (that determine the swaps) are still made on the \emph{labels} $\textcolor{blue}{a}$ of a pair  $(\textcolor{red}{i},\textcolor{blue}{a})$. The indices $\textcolor{red}{i}$ are merely carried along for bookkeeping purposes, recording the original position of the label. 

For instance, the trivial sorting example discussed in Example \ref{trivial} that sorts a list of size 2 by a (potential) swap following a single comparison, will leave the topological sort $(x_1: 1,x_2:2)$ unchanged and  transforms $(x_1: 2,x_2:1)$ into $(x_1: 1,x_2:2)$. 

The same computation \emph{with history} uses index-label pairs as a new type of labels of topological sorts. This computation leaves the topological sort $(x_1: (\textcolor{red}{1},\textcolor{blue}{1}),x_2:(\textcolor{red}{2},\textcolor{blue}{2}))$ unchanged and transforms the topological sort $(x_1:(\textcolor{red}{1},\textcolor{blue}{2}),x_2:(\textcolor{red}{2},\textcolor{blue}{1}))$ into the topological sort $\{x_1:(\textcolor{red}{2},\textcolor{blue}{1}),x_2:(\textcolor{red}{1},\textcolor{blue}{2})\}$. This computation is illustrated in Figure \ref{Fig:sorting-history}. Further swaps, on larger input lists, may move these index-label pairs to other positions in the topological sort, but will never change these index-label pairs' values during the computation. 

\begin{figure}[h]
\centering
\includegraphics[height=4cm]{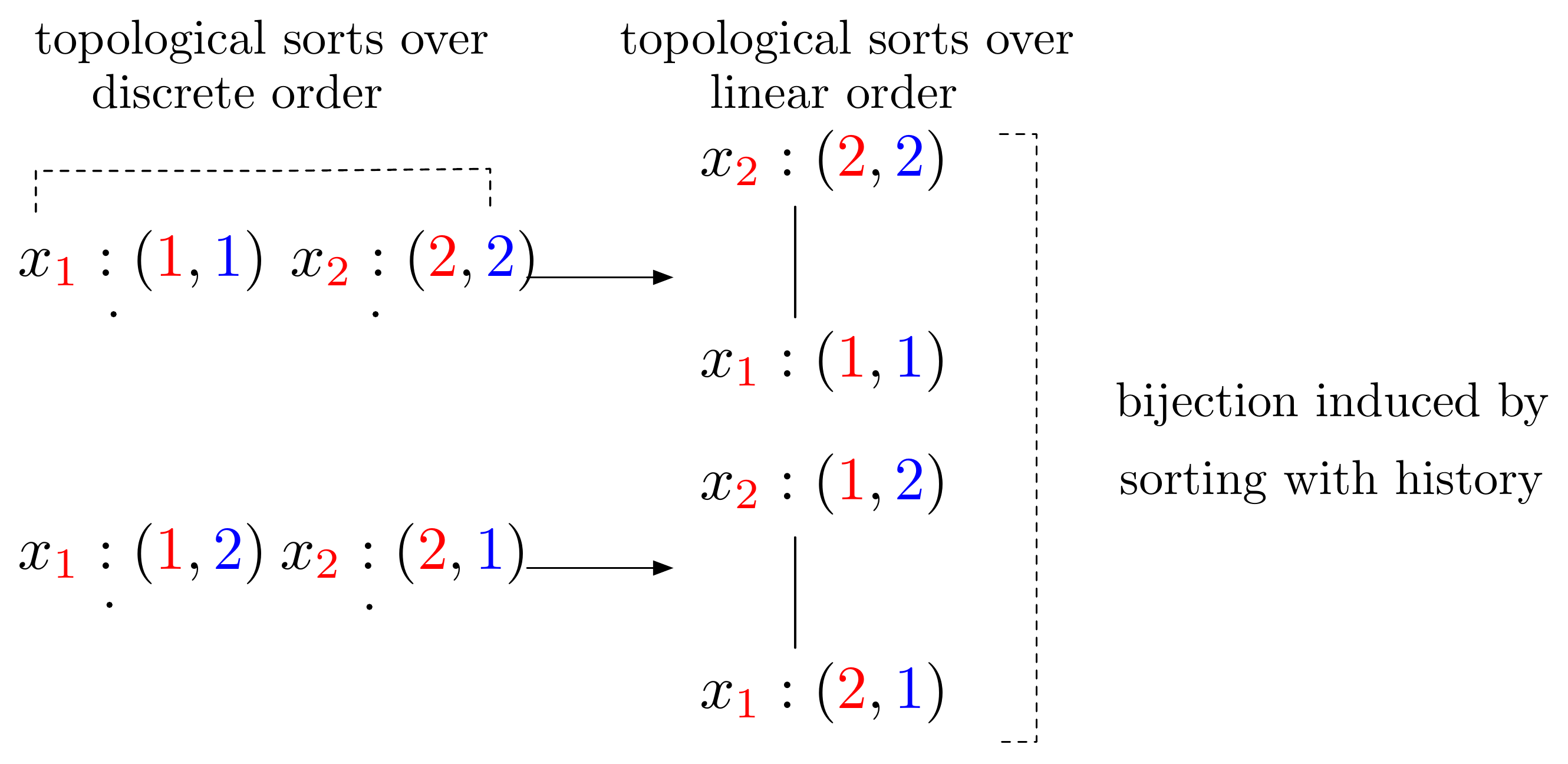}
\caption{Sorting with history for lists of size 2 induces a bijection. Indices (marked red) are paired with labels (marked blue) throughout the computation. Swaps of index-label pairs are based on (blue) label-comparisons only (as was the case for computation without history). (Red) indices are merely carried along in the computation as a bookkeeping device. \label{Fig:sorting-history}}
\end{figure}

Computations with history form a bijection in which the outputs of a computation suffice to determine the inputs. In other words, computations with history are \emph{reversible}, i.e. inputs can be recovered from outputs. In the prior example, the output topological sort $\{x_1:(\textcolor{red}{2},\textcolor{blue}{1}),x_2:(\textcolor{red}{1},\textcolor{blue}{2})\}$ contains the index-label pairs: $(\textcolor{red}{2},\textcolor{blue}{1})$ and $(\textcolor{red}{1},\textcolor{blue}{2})$, which can be ``decoded" to the original input $(x_{\textcolor{red}{1}}: \textcolor{blue}{2},x_{\textcolor{red}{2}}:\textcolor{blue}{1})$. This decoded input, written in history-notation (using index-label pairs instead of original labels), recovers the original input: $(x_1:(\textcolor{red}{1},\textcolor{blue}{2}),x_2:(\textcolor{red}{2},\textcolor{blue}{1}))$. 


\subsection{A basic example of entropy conservation: sorting algorithms} \label{entropy}

 
Comparison-based sorting algorithms compute over $n!$ input lists of size $n$, represented as the root states from the state space over the discrete order of size $n$. As observed, computations with history induce a bijection between inputs and outputs. Viewed over all outputs, the indices of the index-label pairs end up in random order. Indeed, any comparison-based sorting algorithm computing with history and starting from the input permutation \hfill $\sigma = \{(1,\sigma(1)),(2,\sigma(2)),\ldots, (n,\sigma(n))\}$ \\
will produce the sorted output \hfill $ \{(\sigma^{-1}(1),1),(\sigma^{-1}(2),2),\ldots, (\sigma^{-1}(n),n)\}$

We illustrate the transformations on topological sorts induced by a comparison-based sorting algorithm computing with history on all input permutations of size $n$ in Figure \ref{Fig:sorting-history-general}.


\begin{figure}[h]
\centering
\rotatebox[origin=c]{-90}{\includegraphics[height=9cm]{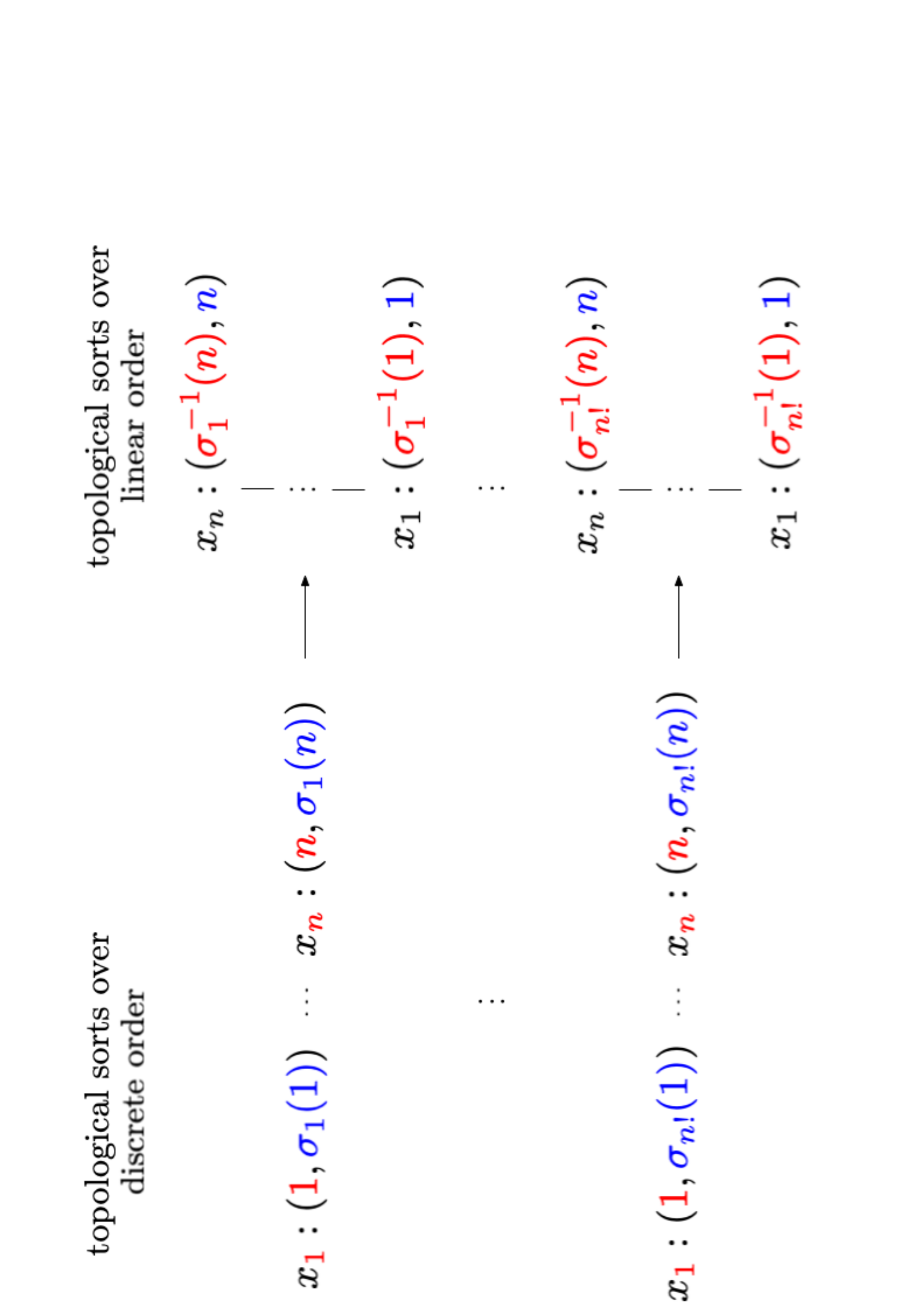}} \vspace*{-1 cm}
\caption{Sorting with history for lists of size n induces a bijection. Indices (marked red) are paired with labels (marked blue) throughout the computation. \label{Fig:sorting-history-general}} 
\vspace*{-4mm} 
\end{figure}

As is clear from Figure \ref{Fig:sorting-history-general}, labels, originally in random order, i.e., uniformly distributed, now occur sorted, i.e. in linear order. Indices, originally in sorted order, linearly arranged from position $1$ to position $n$, after travelling with the labels as index-label pairs during swaps, ultimately occur in a random order, i.e. uniformly distributed. This can be understood by considering that when $\sigma$ ranges over all permutations of size $n$, $\sigma^{-1}$ ranges over the same set of $n!$ permutations. Hence, $\sigma = \{(1,\sigma(1)),(2,\sigma(2)),\ldots, (n,\sigma(n))\}$ when varying over all permutations, causes $ \{(\sigma^{-1}(1),1),(\sigma^{-1}(2),2),\ldots, (\sigma^{-1}(n),n)\}$ to range over all $\sigma^{-1}$, i.e. over all permutations of size $n$. 

At this stage, the linear arrangement of indices is merely an intuitive observation. The indices satisfy the linear order on integers, dictated by the (implicit) ``left-to-right" occurrence of the indices in the original permutation $\sigma = \{(1,\sigma(1)),(2,\sigma(2)),\ldots, (n,\sigma(n))\}$. We will incorporate the linear arrangement of indices explicitly in the context of topological sorts next. 

\subsubsection{Representing the order on indices: left-to-right order} \label{left} 
For traditional list data structures, each input list incorporates a left-to-right order on indices. For instance, for a list $[3,1,2]$, the indices (i.e. positions of the elements) increase in left-to-right order from position $1$ to $3$. This ensures that the list $[3,2,1]$ is not equivalent to, say, the list $[1,3,2]$, i.e. the order of elements is important, contrary to the case of sets. We model lists via topological sorts (root states) of the discrete order. A discrete order is defined on a finite set, in which the order of elements plays no role. To ensure that the discrete order faithfully models the list data structure, we need to impose one more condition on the order $\Delta_n$ over a finite set $X$. 

If $X$ is enumerated as $X = \{x_1,\ldots,x_n\}$, then the order on indexed elements $x_i$ can be imposed via a ``left-to-right order''\footnote{The left-to-right order was first introduced on orders in \cite{ear1}.}, i.e., we impose a second order on the elements $x_i$, in addition to the discrete order. These orders are distinct, i.e. do not affect one another. The left-to-right order $\sqsubseteq^{*}$ specifies that $x_i \sqsubseteq^{*} x_j$ if an only if $i < j$, i.e. the linear order on integers is inherited on the elements of the finite set, forming a linear order $\sqsubseteq^{*}$, denoted in our context by $L_n$, on $X$.\footnote{Note that this definition is specific to discrete orders. We will generalize the left-to-right order to more general partial orders later.}

Once the left-to-right order is imposed on the elements of $X$, indices can be interpreted to form the single root state of a linear order of size $n$. 

Note that due to Hasse diagram representation, the ``left-to-right" order will be represented ``vertically" in the Hasse diagram of a linear order rather than ``horizontally" (as in the traditional list data structure format). Hence we abandon the terminology ``left-to-right" in favour of that of a ``dual order'' in section \ref{mirror}.

\subsubsection{Splitting the roles of indices and labels in topological sorts} \label{splitting}
With the introduction of the left-to-right order, indices are interpreted as topological sorts of the linear order. Hence the computation displayed in Figure \ref{Fig:sorting-history-general} can be split up, where we consider the first coordinate $i$ of a label-index pair $(i,a)$ \emph{separately} from the label-coordinate $a$. This yields a representation using two state spaces, one for labels over a discrete order, the other for indices over a linear order displayed in Figure 6.

\begin{figure}[!ht]
    \centering
    \caption{Separating the state spaces}
        \subfloat[Case of {\bf labels}]{\includegraphics[width=0.4\columnwidth]{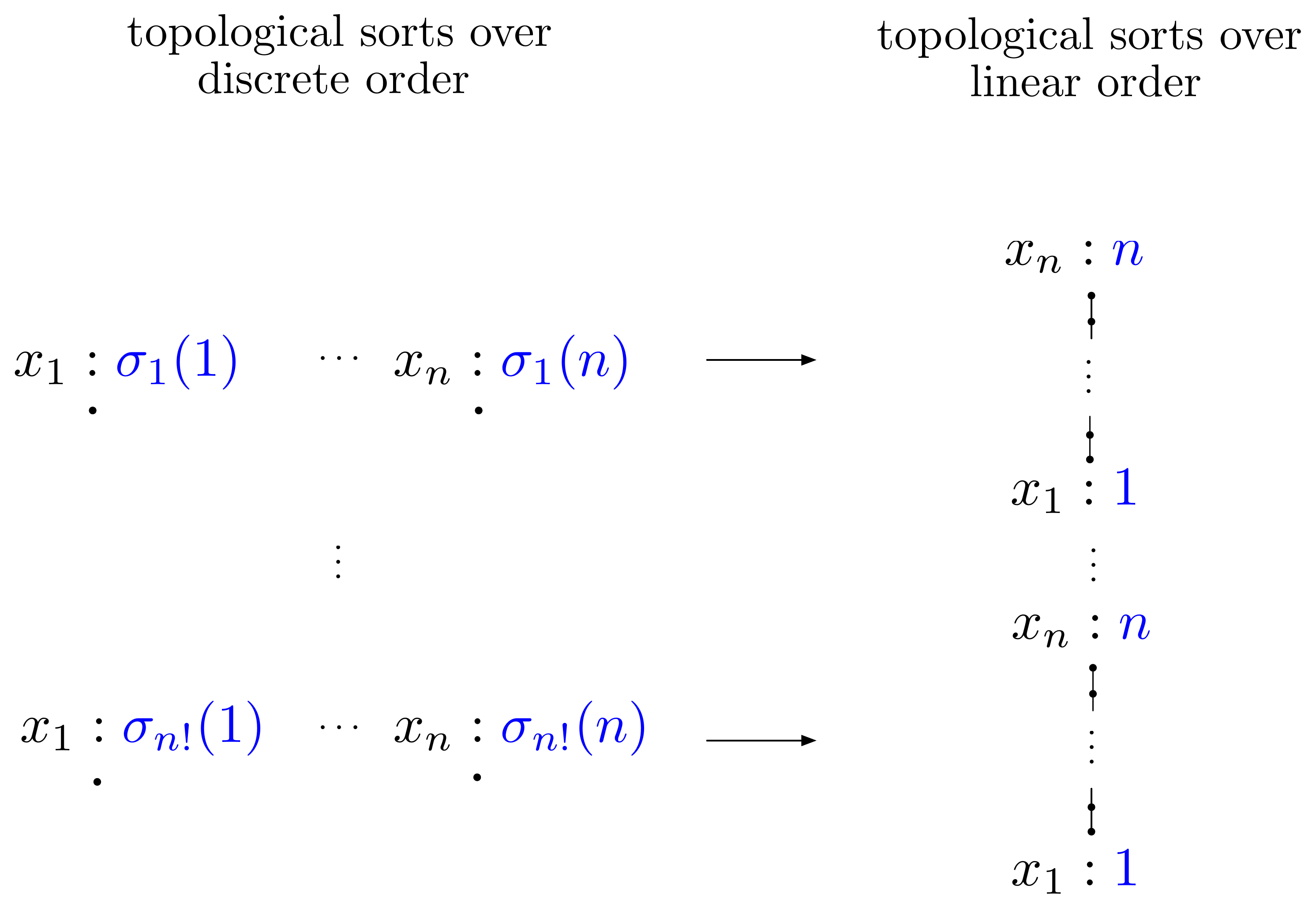}}
        \qquad \qquad \qquad
        \subfloat[Case of {\bf indices}]{\includegraphics[width=0.4\columnwidth]{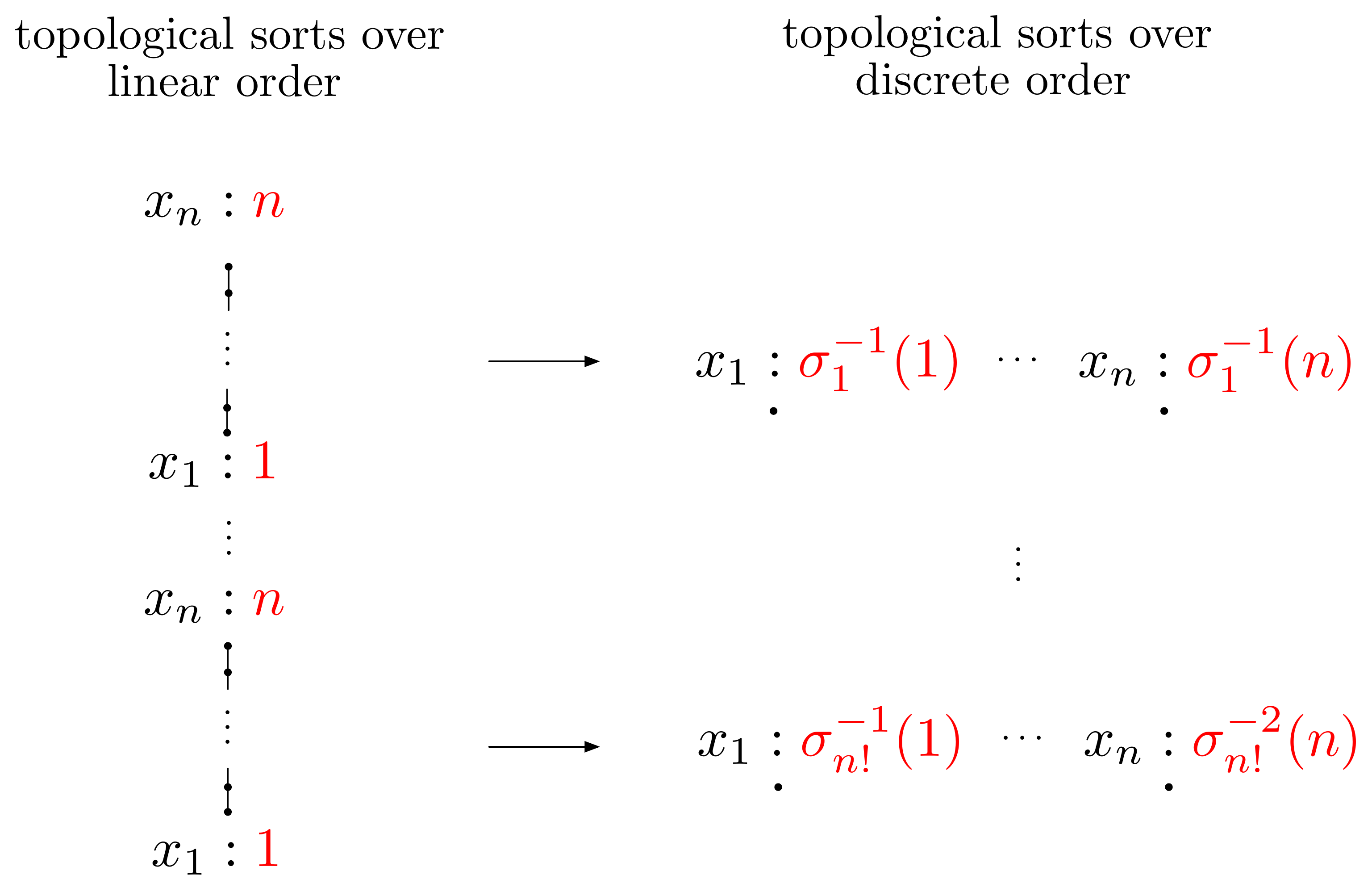}}
    \label{fig:subfigname}
\end{figure}

\subsubsection{Dual order} \label{mirror} 
The orders involved for labels and for indices (for the case of comparison-based sorting algorithms) are \emph{dual} in the following sense: the Hasse diagram of the discrete order $\Delta_n$ represents the \emph{complement} graph of the Hasse diagram of the linear order $L_n$. This type of duality plays a central role in our entropy conservation results and we will introduce the notion of a dual order for more general types of partial orders in Section \ref{dualSP}. 


There is a natural transformation to obtain the dual of the discrete order, i.e. the linear order. One can be obtained from the other by taking the mirror image in the Cartesian plane, where points of the order are represented via a choice of coordinates placing the elements of the discrete order on, say, a horizontal line. The mirror image occurs with respect to the first diagonal. The dual order is formed by the complement graph on the mirror image, forming the Hasse diagram of the linear order. We illustrate this in Figure \ref{Fig:mirror-discrete-linear}.

\begin{figure}[h]
\centering
\includegraphics[height=3.5
cm]{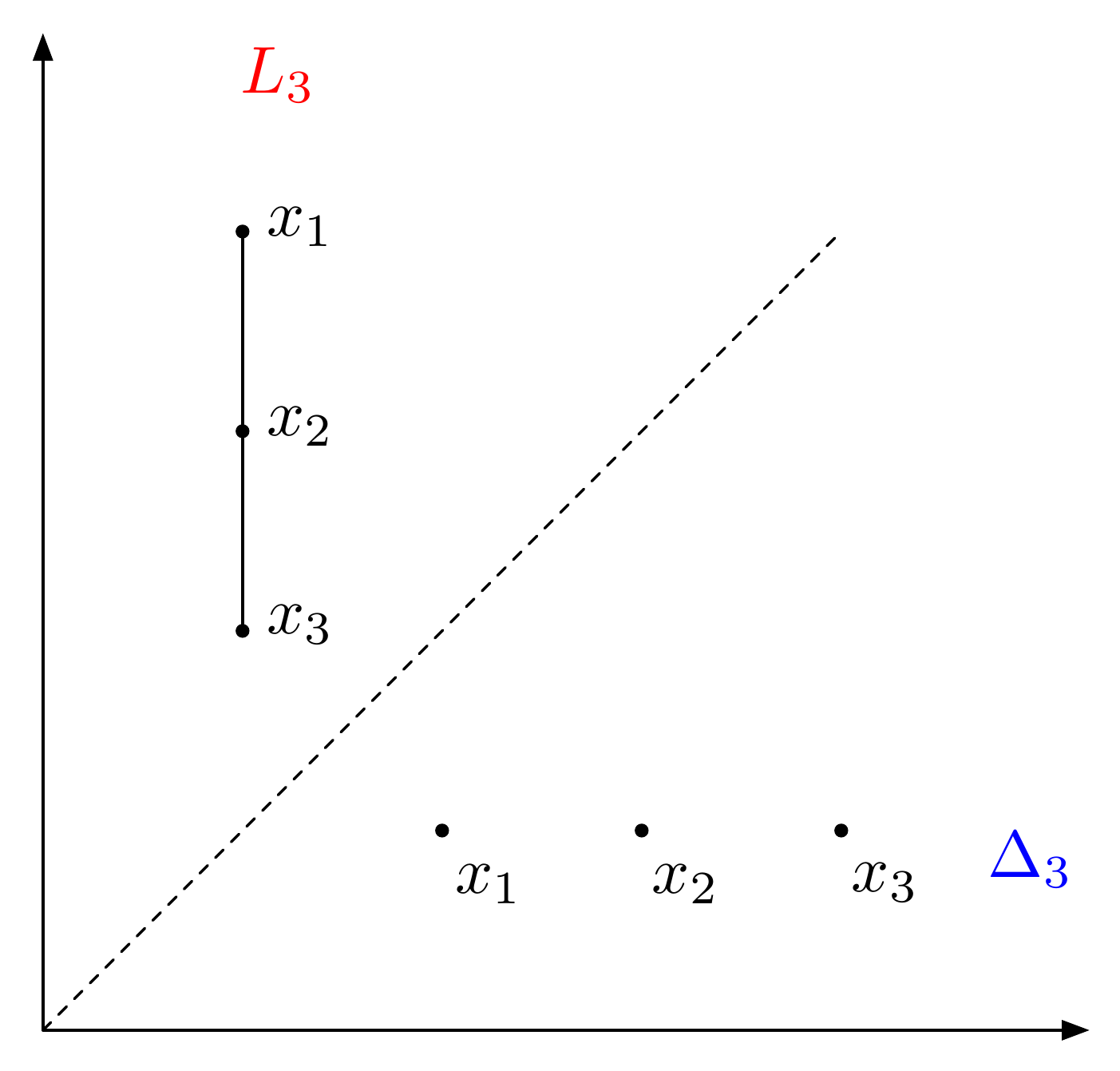}
\caption{Forming the dual order via the mirror image wrt the first bisector \label{Fig:mirror-discrete-linear}}
\end{figure}


\subsubsection{Entropy conservation for comparison-based sorting} \label{entropy-topsort}
Next, we pair the \emph{topological sorts}, i.e. instead of considering the original index-label pairs used in the computation with history, we consider \emph{pairs of topological sorts}, one over a discrete order (representing the order on labels) and one over its dual, a linear order (representing the order on indices) as illustrated in Figure \ref{Fig:sorting-history-general-paired}. These pairs are transformed by a sorting algorithm into new pairs of topological sorts, one over a linear order (representing the labels) and one over a discrete order (representing the indices).

\begin{figure}[h]
\centering
\includegraphics[height=8
cm]{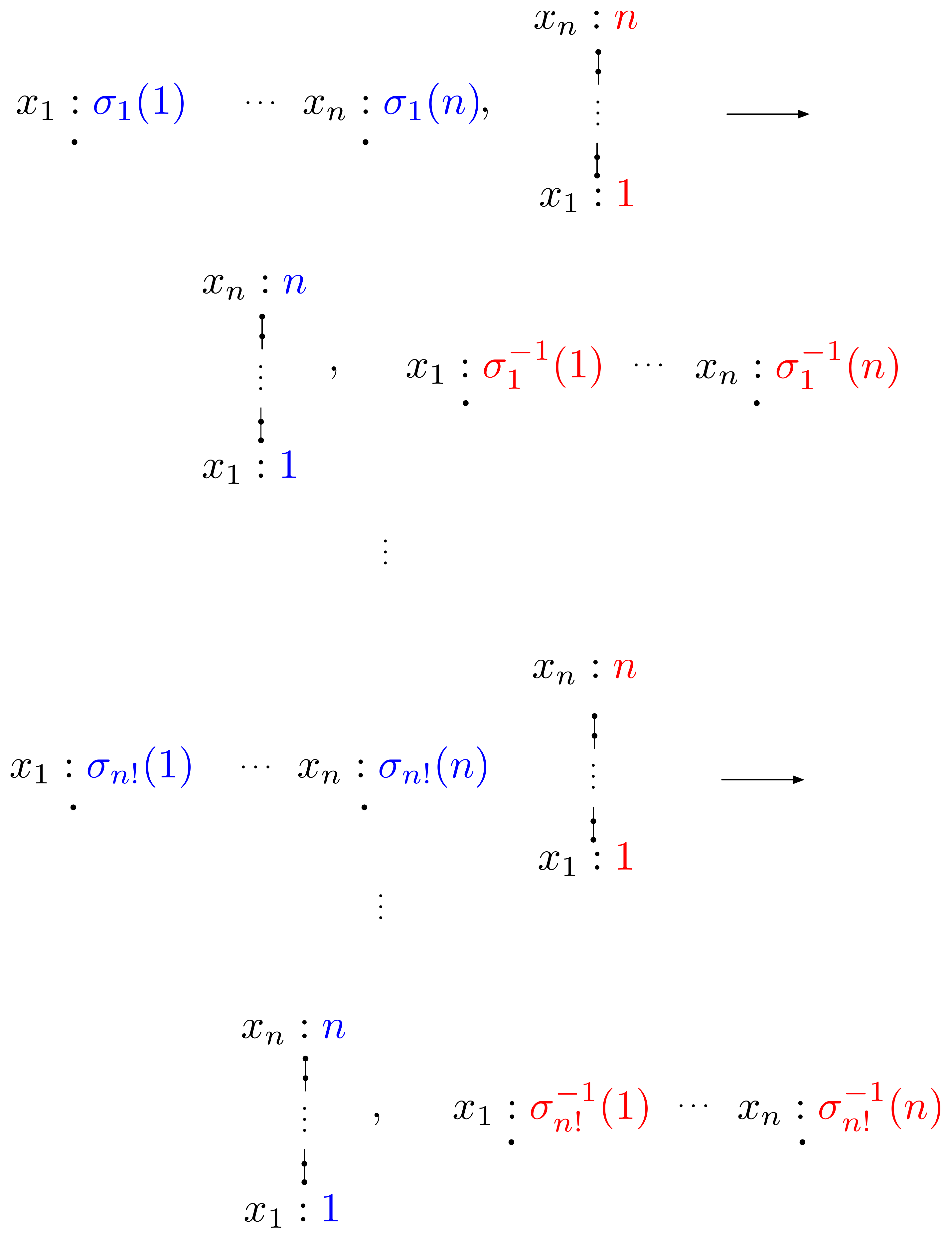}
\caption{Pairing the topological sorts for labels and indices. \label{Fig:sorting-history-general-paired}}
\end{figure}

For the case of labels: the order is transformed from a discrete order (i.e. full freedom on elements) to a linear order (no freedom on elements). Hence randomness at the label level decreases during the computation with history. This is compensated with an increase in randomness at the index level during the same computation, where the order for the case of indices is transformed from a linear order (no freedom on elements) to a discrete order (full freedom on elements). 

Randomness is measured via entropy and we recall that entropy is defined as the log in base 2 of the size of a state space. The above observation on decrease of randomness for labels and increase of randomness for indices can be expressed formally via entropy on inputs and outputs: 

\begin{itemize}
\small{\item{Entropy of {\bf input}-labels for the state space $R(\Delta_n)$: $log_2|R(\Delta_n)| = log_2(n!)$ \\(entropy for the $n!$ states of the state space over the discrete order)}
\item{Entropy of {\bf input}-indices for the \emph{dual} state space $R(L_n)$: $log_2|R(L_n)| = log_2(1) = 0$ \\(entropy for the single state satisfying the linear order)}
\item{Entropy of {\bf output}-labels for the state space $R(L_n)$: $log_2|R(L_n)| = log_2(1) = 0$ \\(entropy for  the single state of the output linear order)}
\item{Entropy of {\bf output}-indices for the \emph{dual} state space $R(\Delta_n)$: $log_2|R(\Delta_n)| = log_2(n!)$ \\(entropy for $n!$ states of the discrete order)} }
\end{itemize} 

\begin{definition}
Consider the discrete order $\Delta_n$ and its dual order $L_n$. The corresponding \textbf{double state space} is the pair $(R(\Delta_n),R(L_n))$, consisting of a state space and its dual state space.

The \textbf{quantitative entropy} (``label entropy''), denoted by $H_q$, is the entropy of the first component of the double state space: $H_q = log_2|R(\Delta_n)| = log_2(n!)$.

The \textbf{positional entropy} (``index entropy''), denoted by $H_p$, is the entropy of the second component of the double state space: $H_p = log_2|R(L_n)|$

\end{definition}

The quantitative entropy $H_q$ for the case of a comparison-based sorting algorithm evolves from maximum entropy $log_2(n!)$ to 0 entropy. 
The positional entropy $H_p$ for the case of a comparison-based sorting algorithm evolves from 0 entropy to maximum entropy $log_2(n!)$.

In other words, comparison-based sorting algorithms satisfy entropy conservation, i.e. the sum $H_p + H_q$ of quantitative and positional entropy remains constant, $H_p + H_q = log_2(n!)$, for the case of inputs \emph{and} for the case of outputs. In summary, we obtain a refined version of global state preservation, involving pairs of global states and a related entropy conservation result.

\begin{proposition} (Entropy conservation for comparison-based sorting) \label{entropyconservationsorting} \\
 Comparison-based sorting algorithms transform the double global state: $(\{(R(\Delta_n),1)\}, \{(R(L_n),n!)\})$ into the double global state \hfill $(\{(R(L_n),n!)\}, \{(R(\Delta_n),1)\})$ 

Let the entropies for the double state space $(R(\Delta_n),R(L_n))$ be $$H^{1}_{p} = log_2(|R(L_n)|) \mbox{ and } H^{1}_{q} = log_2(|R(\Delta_n)|)$$ Similarly, let the entropies for the double state space $(R(L_n),R(\Delta_n))$ be $$H^{2}_{p} = log_2(|R(\Delta_n)|) \mbox{ and } H^{2}_{q} = log_2(|R(L_n)|)$$ The sum of the entropies of the states spaces involved in the double global states is constant $$H^{1}_p + H^{1}_q = H^{2}_p + H^{2}_q = log_2(n!)$$

In other words, \emph{comparison-based sorting satisfies entropy conservation}. 
\end{proposition}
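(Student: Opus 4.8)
The plan is to proceed in three stages: establish the two cardinality facts that drive the argument, verify the stated transformation of double global states, and then compute the four entropies and check the conservation equation by direct substitution.

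First I would recall the two cardinalities established earlier. By Example \ref{state space}(a), the discrete order of size $n$ imposes no constraints on the labels of its topological sorts, so its state space ranges over all permutations of size $n$ and $|R(\Delta_n)| = n!$. Dually, the linear order of size $n$ admits a unique root state (the sorted list), so $|R(L_n)| = 1$. These are the only quantitative inputs the proof requires.

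Next I would justify the transformation of double global states. Here I would invoke the bijection property from Section \ref{entropy}: a comparison-based sorting algorithm computing with history sends the input permutation $\sigma = \{(1,\sigma(1)),\ldots,(n,\sigma(n))\}$ to the sorted output $\{(\sigma^{-1}(1),1),\ldots,(\sigma^{-1}(n),n)\}$. Splitting the roles of labels and indices as in Section \ref{splitting}, the input label-component ranges over the $n!$ root states of $R(\Delta_n)$ held with multiplicity $1$, while the input index-component sits in the unique root state of $R(L_n)$ replicated $n!$ times; this is precisely the input double global state $(\{(R(\Delta_n),1)\}, \{(R(L_n),n!)\})$. After sorting, the labels occupy the unique root state of the linear order with multiplicity $n!$, and because the inversion map $\sigma \mapsto \sigma^{-1}$ permutes the set of all permutations of size $n$, the indices now range over all $n!$ root states of $R(\Delta_n)$ with multiplicity $1$, yielding the output double global state $(\{(R(L_n),n!)\}, \{(R(\Delta_n),1)\})$.

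Finally, the conservation equation follows by substitution. Using $|R(\Delta_n)| = n!$ and $|R(L_n)| = 1$ I would compute $H^{1}_{p} = log_2(1) = 0$, $H^{1}_{q} = log_2(n!)$, $H^{2}_{p} = log_2(n!)$, and $H^{2}_{q} = log_2(1) = 0$, so that both sums equal $log_2(n!)$. The only genuine content lies in the second stage---correctly tracking the multiplicities under the label/index split and recognizing that inversion is a bijection on permutations; once this is in place the entropy computation is immediate, and the conservation is an accounting identity rather than a substantive inequality.
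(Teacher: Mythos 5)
Your proposal is correct and follows essentially the same route as the paper: the paper justifies this proposition through its preceding discussion (the computation-with-history bijection $\sigma \mapsto \sigma^{-1}$ from Section \ref{entropy}, the label/index split of Section \ref{splitting}, and the four entropy values listed in Section \ref{entropy-topsort}), which is exactly the three-stage argument you give. Your explicit tracking of multiplicities in the double global states is, if anything, slightly more careful than the paper's narrative presentation.
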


\subsubsection{Entropy conservation: a word of caution}  \label{caution2}
A word of caution has its place here too, as it did in Section \ref{caution1} on global state preservation: though \emph{every} comparison-based algorithm can be naturally shown to satisfy entropy conservation, this does not entail that \emph{every} operation used in the algorithm conserves entropy. 

We will show that for the case of SP-orders, to be introduced next, entropy conservation (Proposition \ref{entropyconservationsorting}) has a natural generalization in Theorem \ref{duality}. 

\section{Series-parallel orders} \label{SP}
Series-parallel orders, or SP-orders, form an important, computationally tractable class of data structures. These include trees and play a role in sorting, sequencing, and scheduling
applications \cite{moh}. \cite{sch1} introduces a calculus supporting the modular time derivation of algorithms, where the suite of data structuring operations used in \cite{sch1} preserves SP-orders. SP-orders are generated from a finite set of elements using a ``series" and ``parallel" operation over partial orders. 

\subsection{Series-parallel operations} \label{operations}
The series and parallel operations, or SP-operations, $\otimes$ and $\parallel$ are defined over partial orders.

 In terms of Hasse diagrams:

\begin{itemize}
\item{the series operation puts the first poset below the second, where every element of the first order ends up (in the newly formed order) below each element of the second order.} 

\item{the parallel operation puts the two orders side-by-side, leaving their nodes mutually incomparable (across the two orders).} 
\end{itemize} 

Figure \ref{Fig: series-parallel-operations} illustrates the series operation on orders with $\vee$-shaped and $\wedge$-shaped Hasse diagrams, where we introduce the formal definitions of the operations next.

\begin{figure}[h]
\centering
\includegraphics[height=6cm]{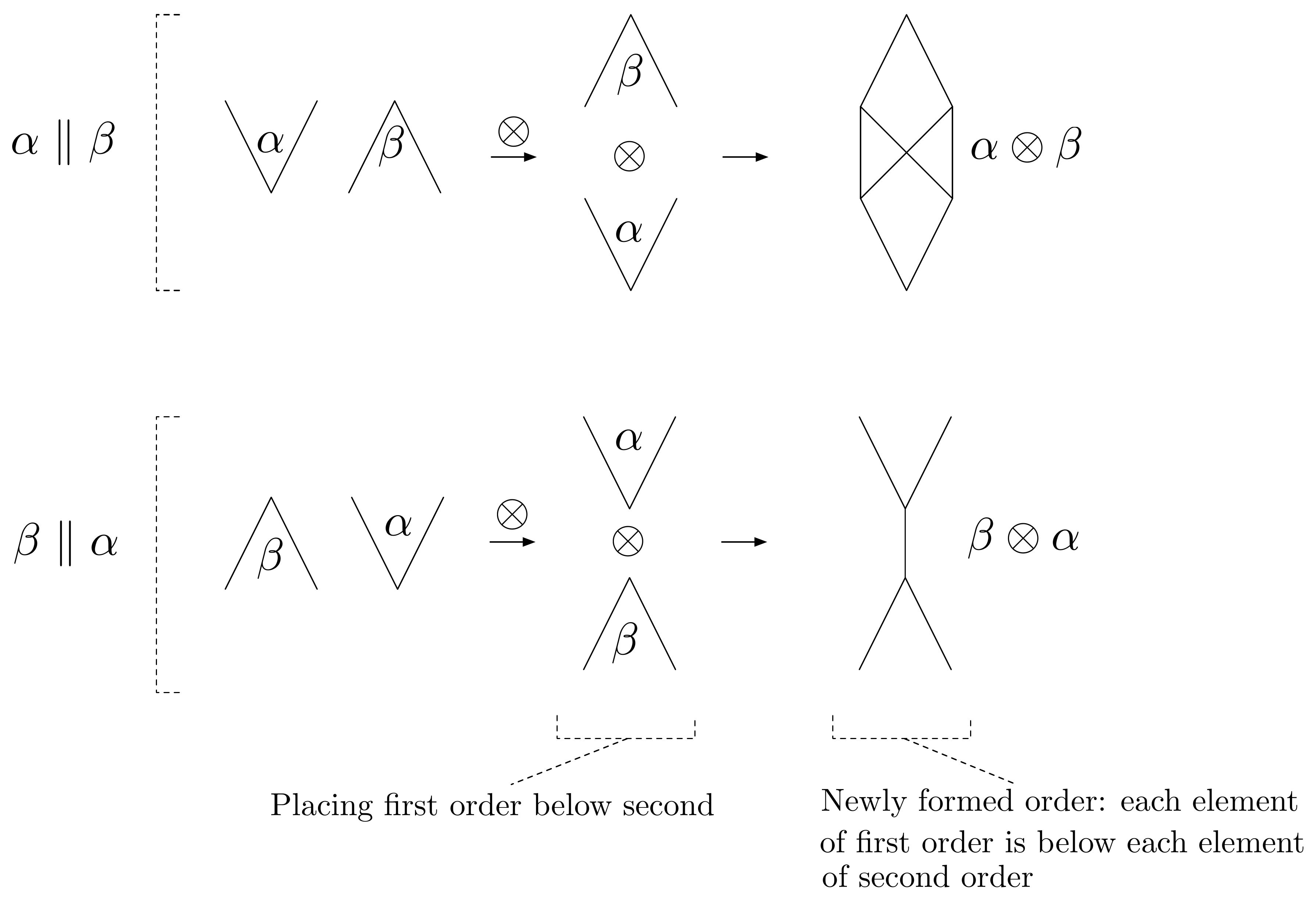}
\caption{The series operation execution illustrated on Hasse diagrams \label{Fig: series-parallel-operations}}
\end{figure}

\begin{definition} \label{sp-order}
Let $\alpha_1,\alpha_2$ be posets where $\alpha_1 = (A_1,\sqsubseteq_1)$, $\alpha_2 = (A_2,\sqsubseteq_2)$ and $A_1 \cap A_2 = \emptyset$, then the series composition $\alpha_1 \otimes \alpha_2$ and the parallel composition $\alpha_1 \parallel \alpha_2$ of these orders are defined to be the following partial orders

\begin{itemize} 
\item{$\alpha_1 \otimes \alpha_2  \,=\,\,(A_1 \cup A_2,   \sqsubseteq_1  \cup \, \sqsubseteq_2 \cup \, (A_1 \times A_2)) $}
\item{$\alpha_1 \parallel \alpha_2 \,=\,\, (A_1 \cup A_2,   \sqsubseteq_1  \cup \, \sqsubseteq_2)$}
\end{itemize}

If $\alpha=\alpha_1\otimes\alpha_2$, then $\alpha_1$ and $\alpha_2$ are \textbf{series components} of $\alpha$. 

If $\alpha=\alpha_1\|\alpha_2$ then $\alpha_1$ and $\alpha_2$ are \textbf{parallel components} of $\alpha$. 
\end{definition}



\begin{remark} \label{dualordersremark} It is easy to verify that both the series and the parallel operation over partial orders are associative. The series operation is clearly not commutative, while the parallel operation is. 

Note however that for the purpose of setting up the computational model for modular timing, in which partial orders model data structures, we will require in practice that the parallel operation is \underline{not} commutative. This will be motivated and formalized in Section \ref{dualSP}. 

\end{remark}

SP-orders can be characterized as the N-free partial orders \cite{moh}, i.e. the orders which do not contain the``N-shaped partial order'' as sub-order. An N-shaped partial order is any order determined by a quadruple $\{x,y,u,v\}$ for which $x \sqsubseteq y, u \sqsubseteq v,  x \sqsubseteq v$ and $u$ and $y$ are unrelated. 


SP-orders include partial orders with tree-shaped Hasse-diagrams, i.e. capture tree data structures. Of course, SP orders need not be tree-shaped, as illustrated by the left-most example in Figure \ref{Fig: SP-orders}. The figure on the right-hand side is an example of a non-SP-order containing the N-shaped sub-order. 

\begin{figure}[h]
\centering
\includegraphics[height=3cm]{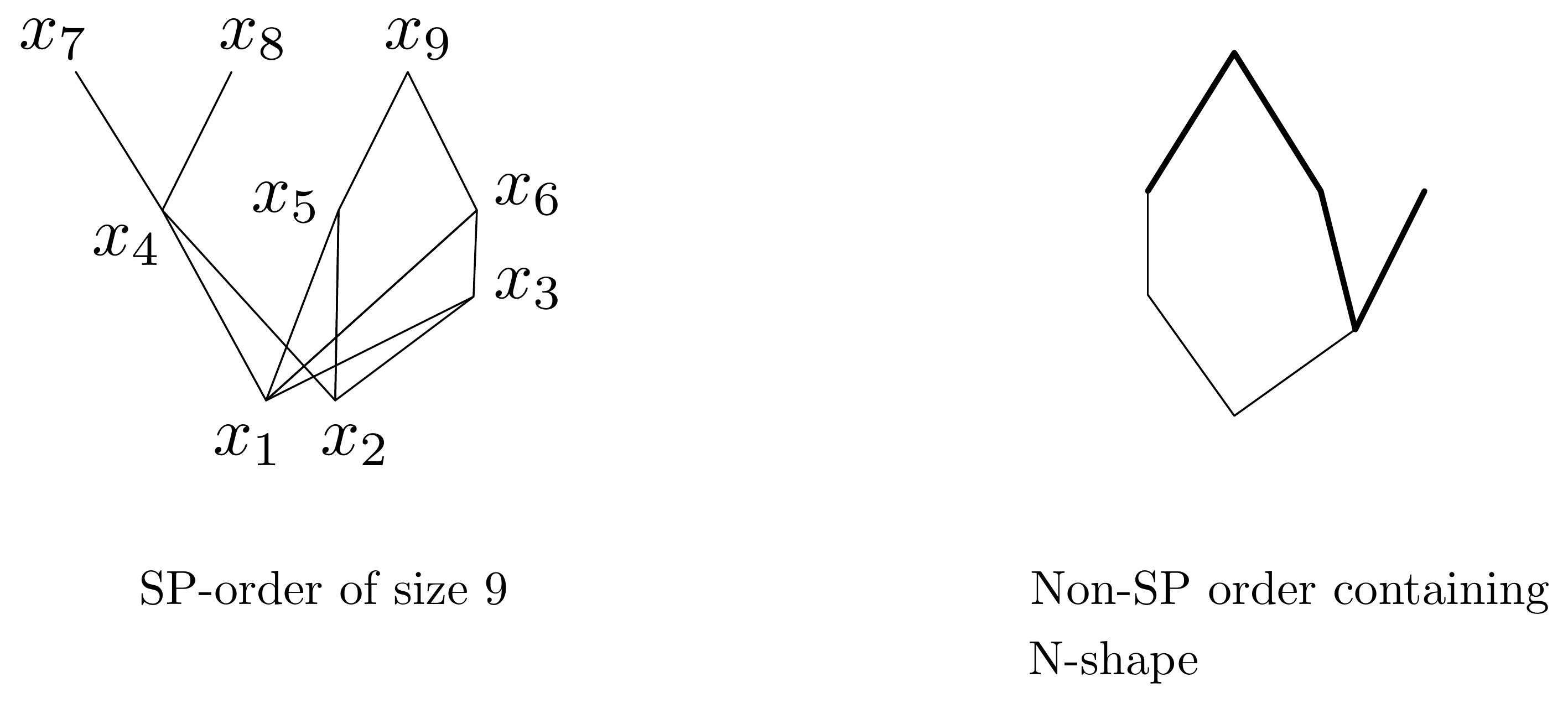}
\caption{SP and non-SP order \label{Fig: SP-orders}}
\end{figure}

\subsection{The series operation is refining}
Comparison-based computations gather information by comparing elements. Order-information gained in this way is represented via partial orders in our context. As the computations proceed, more order-information is obtained. This is captured by the notion of a refinement of the order. 

\begin{definition} \label{refinement} 
A poset $\beta = (X,\sqsubseteq_{\beta})$ \textbf{refines} a poset $\alpha = (X, \sqsubseteq_{\alpha})$, denoted by $\alpha \preccurlyeq \beta$, in case there is a permutation $\sigma$ on (the indices of) elements of $X = \{x_1,\ldots,x_n\}$ such that  for all $x_i, x_j \in X$ $$x_i \sqsubseteq_{\alpha}  x_ j \Rightarrow x_{\sigma(i)} \sqsubseteq_{\beta} x_{\sigma(j)}$$ 

I.e., $\beta$ refines $\alpha$ exactly (in case they have the same domain) when $\alpha$ is a sub-order of $\beta$\footnote{Or, with different domains, when there is an order-preserving map from one into the other.}.
\end{definition}

Each application of a series operation refines the order under consideration (cf. Figure \ref{Fig:SP-notation}). Computations over orders in our context typically start from the discrete order (first Hasse diagram in Figure  \ref{Fig:SP-notation}) and gradually build refinements by repeated application of the series operation, e.g. in the construction the tree-structured Hasse diagram (second Hasse diagram in Figure \ref{Fig:SP-notation}) or in a completed sort, resulting in the linear order (third Hasse diagram in Figure \ref{Fig:SP-notation}).

\subsection{Counting the root states of an SP-order} \label{count}
For general partial orders, determining $|R(\alpha)|$ is $\# P$-complete \cite{bw}. The following lemma \cite{moh} reflects the computational tractability of SP-orders, showing how to quickly compute the size of the number of (non-isomorphic) topological sorts of a SP-order. This determines the size $|R(\alpha)|$ of a state space of a SP-order $\alpha$.
\begin{lemma} \emph{{\bf (State space size)}}

\label{Le:|R|}

Let $\alpha_1$ and $\alpha_2$ be posets.
\begin{enumerate}
\item
\label{Le:|R|Ser}
If $\alpha=\alpha_1\otimes \alpha_2$ then $|R(\alpha)|=|R(\alpha_1)|  |R(\alpha_2)|$
\item
\label{Le:|R|Par}
If $\alpha=\alpha_1\|\alpha_2$ then $|R(\alpha)|={|\alpha|\choose |\alpha_1|}|R(\alpha_1)|  |R(\alpha_2)|$
\end{enumerate}
\end{lemma}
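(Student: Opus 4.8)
The plan is to work directly from the root-state definition (Definition \ref{Def:LPOIso}): the elements of $R(\alpha)$ are exactly the order-preserving bijections (labelings) from $\alpha$ onto the label set $\{1,\ldots,n\}$, where $n=|\alpha|$, identified up to nothing further since such labelings are already in canonical rank form. Equivalently, $|R(\alpha)|$ counts the linear extensions of $\alpha$. Both parts then reduce to counting such labelings while tracking the constraints that $\otimes$ and $\|$ impose. Before splitting into cases I would record one reindexing fact that keeps the factors clean: for any subset $S\subseteq\{1,\ldots,n\}$ with $|S|=m$, the map sending each element of $S$ to its rank within $S$ is an order-isomorphism $S\to\{1,\ldots,m\}$, and it induces a bijection between the labelings of a poset $\beta$ (with $|\beta|=m$) that use exactly the labels in $S$ and respect $\sqsubseteq_\beta$, and the root states $R(\beta)$. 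This is immediate from Definition \ref{Def:LPOIso}, since isomorphism of topological sorts depends only on the relative ranks of labels. This ``rank transfer'' is what lets me treat a block of labels drawn from the middle of $\{1,\ldots,n\}$ exactly as if it were $\{1,\ldots,m\}$.

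For the series case $\alpha=\alpha_1\otimes\alpha_2$, the defining relation $A_1\times A_2\subseteq\,\sqsubseteq$ forces every label assigned to an element of $\alpha_1$ to lie strictly below every label assigned to an element of $\alpha_2$ in any order-preserving labeling. Hence, writing $n_1=|\alpha_1|$, the elements of $\alpha_1$ must receive exactly the labels $\{1,\ldots,n_1\}$ and those of $\alpha_2$ exactly $\{n_1+1,\ldots,n\}$; there is no freedom in this split. Applying the rank-transfer fact to each block, the restriction of a labeling to $\alpha_1$ ranges over $R(\alpha_1)$ and the restriction to $\alpha_2$ ranges over $R(\alpha_2)$, independently, and a labeling of $\alpha$ is determined by and determines this pair. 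This gives a bijection $R(\alpha)\cong R(\alpha_1)\times R(\alpha_2)$ and hence $|R(\alpha)|=|R(\alpha_1)|\,|R(\alpha_2)|$.

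For the parallel case $\alpha=\alpha_1\|\alpha_2$, the elements of $A_1$ and $A_2$ are mutually incomparable, so the only constraints on a labeling are those internal to $\alpha_1$ and to $\alpha_2$. I would decompose the choice of an order-preserving labeling into three independent choices: first pick the $n_1$-element set $S\subseteq\{1,\ldots,n\}$ of labels assigned to $\alpha_1$, which may be any subset, giving $\binom{n}{n_1}$ options; second, label $\alpha_1$ using $S$ while respecting $\sqsubseteq_1$; third, label $\alpha_2$ using the complement $\{1,\ldots,n\}\setminus S$ while respecting $\sqsubseteq_2$. By the rank-transfer fact the second choice contributes $|R(\alpha_1)|$ and the third contributes $|R(\alpha_2)|$, each independent of the choice of $S$, so the product rule yields $|R(\alpha)|=\binom{|\alpha|}{|\alpha_1|}|R(\alpha_1)|\,|R(\alpha_2)|$.

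The arguments are elementary bijective counting, and the single point that needs genuine care—the main obstacle, such as it is—is verifying the rank-transfer fact rigorously: that restricting a labeling of $\alpha$ to a component and normalizing by rank really lands in that component's state space, and that this is mutually inverse to the obvious reconstruction that reinserts the component's labels into their prescribed positions. Once that bijection is pinned down, both formulas follow at once. I would also note that the proof uses only the two composition rules of Definition \ref{sp-order} and never appeals to $N$-freeness or to any further SP structure; this is consistent with the statement, which is asserted for arbitrary posets $\alpha_1,\alpha_2$.
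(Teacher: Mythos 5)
Your proof is correct, but there is no proof in the paper to compare it against: the paper states Lemma \ref{Le:|R|} as a known result imported from M\"{o}hring \cite{moh} and does not prove it. Your argument is the standard bijective-counting proof and it fills that gap soundly. Identifying $|R(\alpha)|$ with the number of linear extensions of $\alpha$ is faithful to the paper's definitions (root states are exactly the increasing bijections onto $\{1,\ldots,n\}$); the series case is right because $A_1\times A_2\subseteq\,\sqsubseteq$ together with injectivity of labelings forces the labels of $\alpha_1$ to be precisely $\{1,\ldots,|\alpha_1|\}$, so a labeling of $\alpha$ decomposes uniquely into a pair of root states; and the parallel case is a clean application of the product rule, choosing the $\binom{|\alpha|}{|\alpha_1|}$ possible label sets for $\alpha_1$ and noting that the two within-component counts are independent of that choice. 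The ``rank transfer'' observation you isolate is indeed the only point needing care, and it is exactly what Definition \ref{Def:LPOIso} supplies, since isomorphism of topological sorts depends only on the relative ranks of labels; normalizing a block of labels by rank therefore lands in $R(\alpha_1)$ (resp.\ $R(\alpha_2)$), and reinserting labels into the chosen positions is the inverse map. Your closing remark is also accurate and worth keeping: the lemma needs nothing about SP-structure or $N$-freeness, only the two composition rules of Definition \ref{sp-order} (including the disjointness $A_1\cap A_2=\emptyset$ built into them), so it holds for arbitrary finite posets $\alpha_1,\alpha_2$.
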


\subsection{SP-expressions}
SP-orders are of interest since they determine a computationally tractable class of data structures. Many problems are NP-hard in general. Considerable attention has been given to partial orders with ``nice" structural properties supporting the design of efficient methods \cite{moh}. This includes the class of SP-orders, for which the number of topological sorts can be efficiently computed, while the case for general partial orders, as mentioned, is $\# P-$complete \cite{bw}. 

Intuitively, SP-orders are partial orders created from a finite set $X = \{x_1,\ldots,x_n\}$ equipped with the discrete order, by repeated applications of the series and parallel operation, starting from singleton discrete orders. This is formalized via the notion of a SP-expression and an SP-order determined by a SP-expression. Examples are provided following this formal definition. 


\begin{definition}(general SP-expressions)
\begin{itemize}
\item{A variable $x_i$ is a general SP-expression} 

\item{$\Psi = (\Psi_1 \otimes \Psi_2)$ is a general SP-expression when $\Psi_1$ and $\Psi_2$ are general SP-expressions }

\item{$\Psi = (\Psi_1 \parallel \Psi_2)$ is a general SP-expression when $\Psi_1$ and $\Psi_2$ are general SP-expressions}

\end{itemize}

A \textbf{SP-expression} is a general SP-expression that contains each of its variables $x_i$ at most once. SP-expressions are logical formulae built form the symbolic notations $\otimes$ and $\parallel$ for the series and parallel operation. We denote SP-expressions in the following by Greek letters $\Psi,\Phi$ etc. 

$Var(\Psi)$ denotes the set of variables in $\Psi$.
\end{definition}

\begin{example} The expressions $(x_1 \otimes x_2)$, $(x_1 \otimes x_3) \parallel x_2$ and $x_3 \parallel (x_1 \parallel x_2)$ are SP-expressions. The following is a general SP-expression that is not a SP-expression: $(x_1 \otimes x_2) \parallel x_2$. 
\end{example}

\subsection{SP-orders determined by SP-expressions} \label{order-generation}

An SP-expression $\Psi$ with $n$ variables ``generates" a SP-order of size $n$ starting from the discrete order $(X,\Delta_n)$ where $X$ consists of the set of elements $Var(\Psi) = \{x_1, \ldots, x_n\}$. The generation process is defined as follows: 

\begin{enumerate}
\small{\item{Interpret each variable $x_i$ in $\Psi$ as a singleton discrete order obtained by the restriction of the discrete order $(X,\Delta_n)$ to the element $x_i$} 

\item{Interpret each $\otimes$-symbol in $\Psi$ as a series operation over partial orders}

\item{Interpret each $\parallel$-symbol in $\Psi$ as a parallel operation over partial orders} 

\item{Execute the operations of $\Psi$ over partial orders (as indicated in items 2 and 3 above) in the precedence-order determined by the brackets of the SP-expression $\Psi$} }

\end{enumerate} 

The final result is referred to as \textbf{the SP-order determined by $\Psi$}. 

\begin{remark} It is clear the each SP-order of a given size $n$ can be obtained from a suitably chosen SP-expression wth $n$ variables and a discrete order of size $n$ via the process sketched above. 
 \end{remark}

\begin{example} 
The 4-element order with tree-shaped Hasse diagram in Figure \ref{Fig: topological-sort-example-alt} can be determined by the SP-expression $((x_1 \otimes x_2) \parallel x_3) \otimes x_4$ via the process described above. The result is displayed via the second Hasse diagram in Figure \ref{Fig:SP-notation} which illustrates two refinements of a discrete order $\Delta_4$, resulting in the linear order $L_4$. In each case, SP-expressions are displayed that generate the SP-orders involved.  

\end{example} 

\begin{remark} Given a SP-order determined by SP-expression $\Psi$, then with abuse of terminology we will refer to this order as ``the order $\Psi$'' for the sake of brevity.  For instance, for SP-expression $\Psi  = ((x_1 \otimes x_2) \parallel x_3) \otimes x_4$, we refer to the order generated by this SP-expression as ``the order $((x_1 \otimes x_2) \parallel x_3) \otimes x_4$'', rather than the order ``generated by this SP-expression''. 

We extend this convention also to the context of state spaces and refer to the state space $R(\alpha)$, where $\alpha$ is the order $((x_1 \otimes x_2) \parallel x_3) \otimes x_4$, as the state space $R(((x_1 \otimes x_2) \parallel x_3) \otimes x_4)$.  \end{remark}

\begin{figure}[h]
\centering
\includegraphics[height=2.7cm]{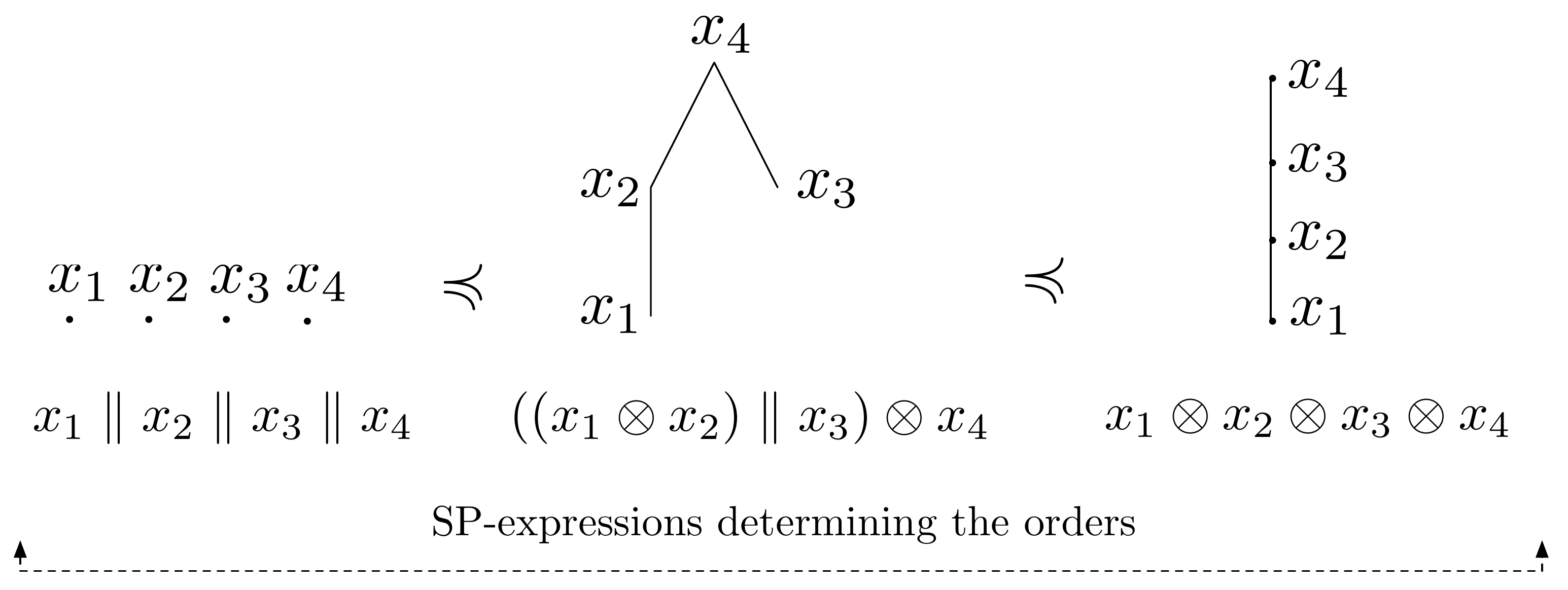}
\caption{refinements of the discrete order and their respective SP-expressions \label{Fig:SP-notation}}
\end{figure}

Finally we include an application of Lemma \ref{Le:|R|} using the SP-expression notation.

\begin{example}The state space discussed in Figure \ref{Fig: topological-sort-example-alt} contains three root states: V, VI and VII, representing the heaps of size $4$. A corresponding SP-order $((x_1 \otimes x_2) \parallel x_3) \otimes x_4$ over the four-element set $X = \{x_1,x_2,x_3,x_4\}$ is displayed in Figure \ref{Fig:SP-notation}. Applying Lemma \ref{Le:|R|}, we obtain: $|R((x_1 \otimes x_2) \parallel x_3) \otimes x_4| = (({3\choose 2} \times (1 \times 1)) \times 1) \times1 = 3$, corresponding to the 3 topological sorts (V, VI and VII) of this order displayed in Figure \ref{Fig: topological-sort-example-alt}. 
\end{example}

\section{Dual SP-orders} \label{dualSP}
To ensure that SP-orders faithfully model data structures, we need to impose an additional partial order on SP-orders, leading to the notion of ``a dual SP-order''\footnote{Originally introduced in \cite{ear1}, be it in the left-to-right order context--not a dual order context.}. 

\textbf{\emph{In our context, in which we model data structures, the parallel operation in SP-orders will be adapted to a non-commutative version}}. As was the case for our discussion of the left-to-right order for the particular case of the discrete order in Section \ref{left}, this will be achieved by imposing an additional order. The additional order serves to distinguish ``left" and ``right" sub-orders $\alpha_1$ and $\alpha_2$ respectively of a parallel composition $\alpha_1 \parallel \alpha_2$. 

We recall the motivation for this additional order, discussed in Sections \ref{left}, \ref{mirror} and Remark \ref{dualordersremark} for the particular case of discrete orders, where we restate matters using the SP-notation. 

The discrete order $x_1 \parallel x_2  \parallel \ldots  \parallel x_n$ represents an unordered \emph{list} $L = [x_1,\ldots,x_n]$. For regular data structures such as lists or arrays, the order of the indices is not interchangeable, i.e., $L = [x_1,x_2]$ is not equivalent to $L' = [x_2,x_1]$ and hence $x_1 \parallel x_2$ should not be equivalent to $x_2 \parallel x_1$. Similarly, for a heap of size 4, the data structure in our model is represented by the SP-order $((x_1 \otimes x_2) \parallel x_3) \otimes x_4$. This SP-order should not be equivalent to the SP-order $(x_3 
\parallel (x_1 \otimes x_2)) \otimes x_4$, since a heap-data structure corresponds to a complete binary tree in which some leaves may have been removed in right to left order only \cite{knu}. 

Imposing an additional order ensuring non-commutativity of the parallel operation also matters in terms of the state space. With a commutative parallel operation, the state space $R(\Delta_n) = R(x_1 \parallel x_2  \parallel \ldots  \parallel x_n)$ consists of the $n!$ permutation-topological sorts. For a commutative parallel operation the state space of $R(x_1 \parallel x_2  \parallel \ldots  \parallel x_n)$ would reduce to a \emph{single} topological sort. 

For the case of the discrete order $x_1 \parallel x_2  \parallel \ldots  \parallel x_n$ we can impose a left-to-right order $\sqsubseteq^{*}$ on the indexed elements $x_i$ by requiring that $x_i \sqsubseteq^{*} x_j \Leftrightarrow i < j$. Hence the imposed left-to-right order $\sqsubseteq^{*}$ is the linear order $(X,L_n)$. 

Using the notation for SP-orders, the original order is the discrete order $x_1 \parallel x_2  \parallel \ldots  \parallel x_n$ while the newly imposed order is the linear order $x_1 \otimes x_2  \otimes \ldots  \otimes x_n$. This is exactly the ``dual order" obtained from the original one by interchanging parallel operations with series operations and vice versa. Just as for the case of the series operation, for which a series composition $\alpha_1 \otimes \alpha_2$ places all elements of $\alpha_1$ below each element of $\alpha_2$ in the given order $(X,\sqsubseteq)$, we impose a second order ``left-to-right order'' $(X,\sqsubseteq^{*})$ ensuring that \emph{under this order} a parallel composition $\alpha_1 \parallel \alpha_2$ will place every element of $\alpha_1$ below each element of $\alpha_2$. I.e. the two orders are duals, switching the roles of the series and parallel operations. This naturally leads to the notion of a ``dual SP-order''. 

\subsection{Dual SP-order and double state space}
\begin{definition}  The \textbf{dual of a SP-expression} $\Psi$ is the SP-expression $\Psi^{*}$ obtained by interchanging series operations and parallel operations. The dual of the SP-expression $((x_1 \otimes x_2) \parallel x_3) \otimes x_4$ is the SP-expression $((x_1 \parallel x_2) \otimes x_3) \parallel x_4$. 

Given an SP-order $\alpha$ determined by an SP-expression $\Psi$, then the \textbf{dual SP-order} $\alpha^{*}$ is the SP-order determined by the dual SP-expression $\Psi^{*}$.


Given SP-order $\alpha$ and its dual $\alpha^{*}$, then $R(\alpha^{*})$ is a \textbf{dual state space} of the state space $R(\alpha)$ and $(R(\alpha),R(\alpha^{*}))$ is a \textbf{double state space}.


\end{definition}

We recall from Section \ref{mirror} that the Hasse diagram representations of the discrete order and its dual could be obtained via a reflection along the first bisector in the Cartesian plane. This is also the case for a general SP-order and its dual.

\subsection{Cartesian construction of the dual order} \label{quadrant}

We generalize the Cartesian construction of the Hasse diagram of the dual of a discrete order to general \emph{SP-orders}.

\begin{center} \includegraphics[height=1.1cm]{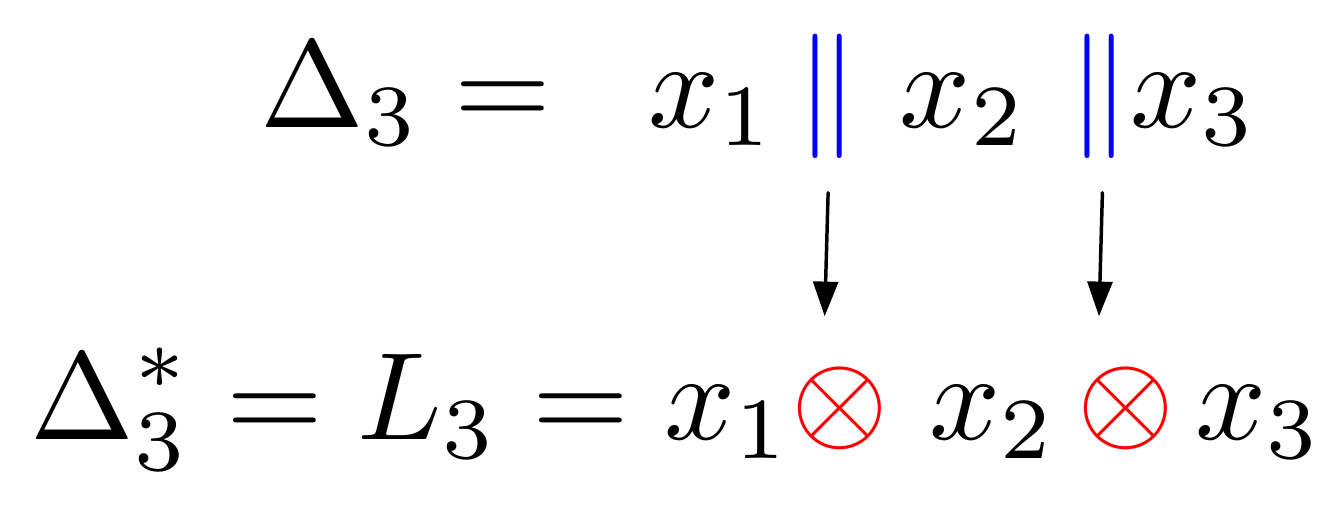} \end{center}

\begin{center}

\includegraphics[height=3.5cm]{mirror-discrete-linear} \end{center}

\noindent {\bf Dual SP-order construction} \\
Given an SP order $\alpha$. Consider the Cartesian plane and draw the Hasse diagram of the order $\alpha$ in the lower half of the bisected first quadrant (for a suitable choice of coordinates of the Hasse diagram picked freely within these given quadrant constraints\footnote{These constraints are merely introduced to enhance the graphical display, avoiding overlap of the Hasse diagrams.}).

The Hasse diagram of the dual order $\alpha^{*}$ can be obtained (up to isomorphism) from the Hasse diagram of the order $\alpha$ (drawn in the above way) via the following process of \emph{Reflection} and \emph{Complementation}. \emph{Reflection} transforms the Hasse diagram of the original order into a proper Hasse diagram of the dual order (where mere complementation on the original Hasse diagram would not achieve this goal due to the complementation formation's ``left-to-right" nature, conflicting with a Hasse diagram's vertical setup). \emph{Complementation} represents the interchanging of the series and parallel operations. The dual SP-order construction proceeds as follows: 

\begin{itemize}
\item{({\bf Reflection}) Reflect each point $x_i$ with respect the bisector $y = x$}
\item{({\bf Complementation}) Add directed edges between reflected pairs $x_{i}^{*},x_{j}^{*}$ in case $(x_i,x_j)$ is an unrelated pair in the Hasse diagram of $\pol_{\Psi}$ (in other words: create the complement graph on reflected pairs). An edge between reflected elements \emph{points from} $x_{i}^{*}$ to $x_{j}^{*}$ in case $x_{i}^{*}$ is \emph{below} the $x_{j}^{*}$ in the dual Hasse diagram, or, equivalently, in case, for an unrelated pair $x_i$ and $x_j$, the element $x_{i}$ occurs to the left of $x_{j}$ in the original Hasse diagram.}
\end{itemize}

\begin{example} Hasse diagram for $\alpha = ((x_1 \otimes x_2) \para x_3) \otimes x_4)$ and Hasse diagram of the dual $\alpha^{*} = ((x_1 \parallel x_2)\otimes x_3) \parallel x_4)$.\end{example}

\begin{figure}[!ht]
    \centering
    \caption{Separating the state spaces}
        \subfloat[\emph{Reflection}: forming the Hasse diagram of $\alpha^{*} = ((x_1 \parallel x_2)\otimes x_3) \parallel x_4)$]{\includegraphics[width=0.4\columnwidth]{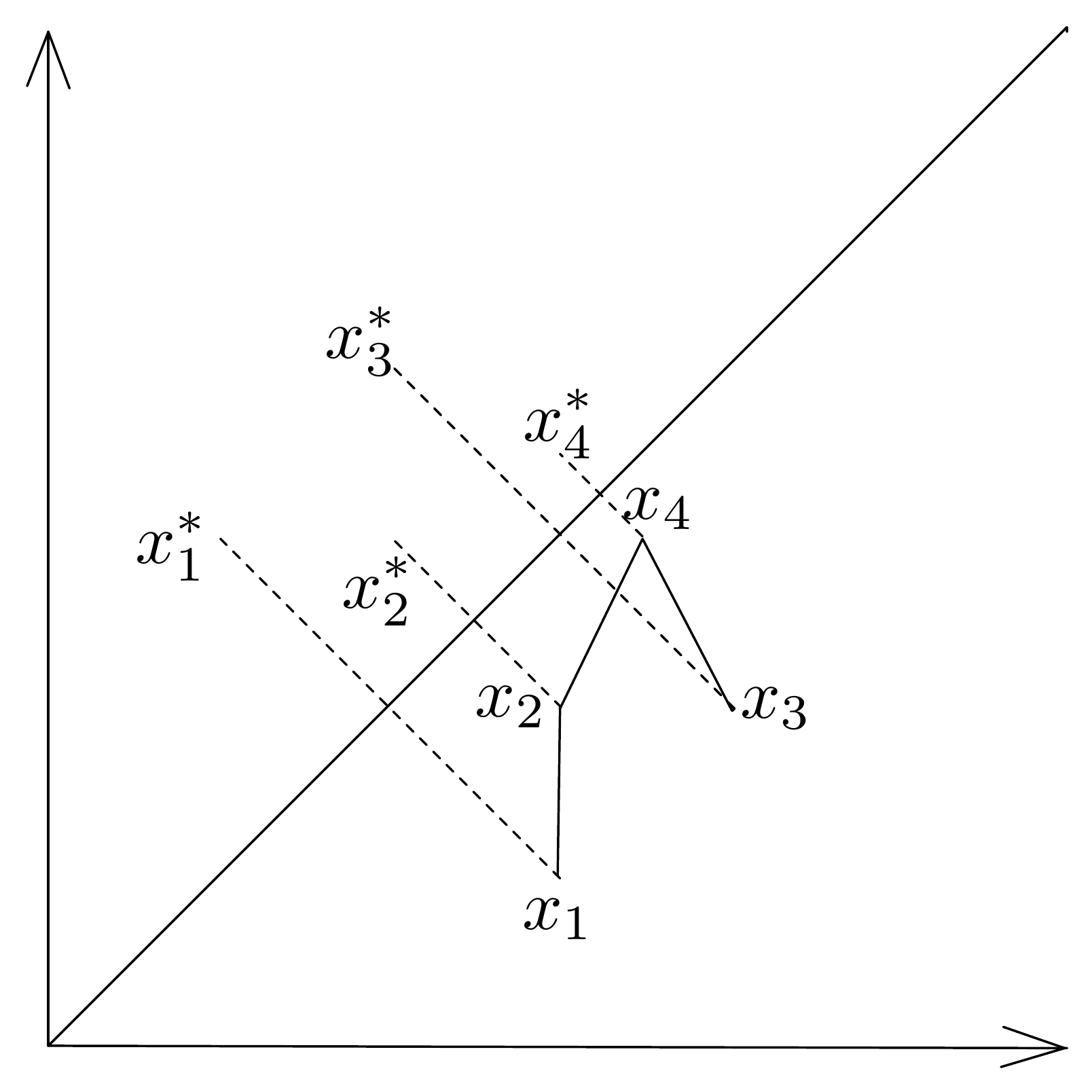}}
        \qquad \qquad \qquad
        \subfloat[\emph{Complementation}: forming the Hasse diagram of $\alpha^{*} = ((x_1 \parallel x_2)\otimes x_3) \parallel x_4)$]{\includegraphics[width=0.4\columnwidth]{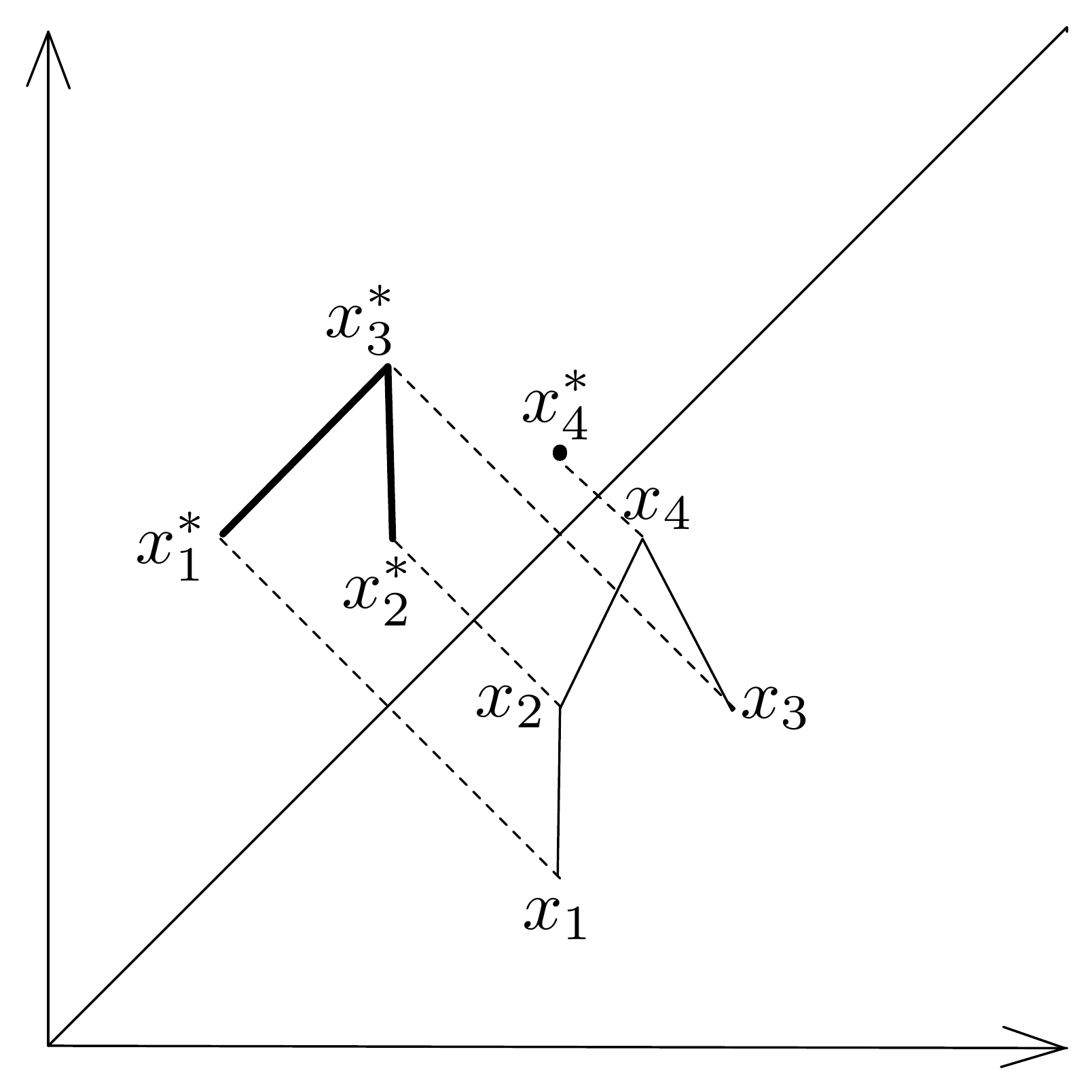}}
    \label{fig:subfigname}
\end{figure}

\subsection{Entropy conservation equality for double state spaces} \label{entropydouble}
As was the case in Section \ref{entropy}, we equate labels with ``quantitative information" and indices with ``positional information" and we will show the following result for double state spaces.

\begin{definition}
Given a SP-order $\alpha$ determined by SP-expression $\Psi$. Let $\alpha^{*}$ be the dual order determined by $\Psi^{*}$. We define the \textbf{quantitative entropy} (label-entropy) $H_q$ and the \textbf{positional entropy} (index-entropy) $H_{p}$ as follows: $H_q = log_2(|R(\alpha)|)$ and $H_p = log_2(|R(\alpha^{*})|)$.

The pair $(H_q,H_p)$ is the \textbf{entropy of a double state space} $(R(\alpha),R(\alpha^{*}))$. The \textbf{maximum entropy} is the entropy-pair of a state space $R(\Delta_n)$ over the discrete order of size $n = |\alpha|$, given by 

$$H_{max} = log_2(n!)$$

\end{definition}

The following theorem expresses entropy conservation between the components of the entropy-pair $(H_q,H_p)$ of a double state space $(R(\alpha),R(\alpha^{*}))$. 

\begin{theorem} (Entropy conservation) \label{duality} \\
Quantitative and positional entropy are inversely proportional: $H_{p} + H_{q} = H_{max}$
\end{theorem}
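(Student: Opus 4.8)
The plan is to observe first that, after exponentiating base $2$, the additive statement $H_p + H_q = H_{max}$ is equivalent to the multiplicative identity
$$|R(\alpha)| \cdot |R(\alpha^*)| = n!,$$
where $n = |\alpha| = |\alpha^*|$, since dualization interchanges $\otimes$ and $\parallel$ but leaves the underlying set $Var(\Psi)$ unchanged. I would therefore prove this product formula and recover the theorem by applying $\log_2$ to both sides, using $H_q = \log_2|R(\alpha)|$, $H_p = \log_2|R(\alpha^*)|$, and $H_{max} = \log_2(n!)$. The natural tool is structural induction on the SP-expression $\Psi$ determining $\alpha$, driven by Lemma \ref{Le:|R|} together with the duality fact that the top-level operation flips: if $\alpha = \alpha_1 \otimes \alpha_2$ then $\alpha^* = \alpha_1^* \parallel \alpha_2^*$, and symmetrically if $\alpha = \alpha_1 \parallel \alpha_2$ then $\alpha^* = \alpha_1^* \otimes \alpha_2^*$.

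For the inductive step, write $n_1 = |\alpha_1|$, $n_2 = |\alpha_2|$, so $n = n_1 + n_2$, and assume the product formula for the strictly smaller components, $|R(\alpha_i)|\,|R(\alpha_i^*)| = n_i!$ for $i = 1,2$. In the case $\alpha = \alpha_1 \otimes \alpha_2$, Lemma \ref{Le:|R|}(\ref{Le:|R|Ser}) gives the series side $|R(\alpha)| = |R(\alpha_1)|\,|R(\alpha_2)|$ with no binomial factor, while the dual $\alpha^* = \alpha_1^* \parallel \alpha_2^*$ is parallel, so Lemma \ref{Le:|R|}(\ref{Le:|R|Par}) gives $|R(\alpha^*)| = \binom{n}{n_1}|R(\alpha_1^*)|\,|R(\alpha_2^*)|$. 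Multiplying and regrouping yields
$$|R(\alpha)|\,|R(\alpha^*)| = \binom{n}{n_1}\bigl(|R(\alpha_1)|\,|R(\alpha_1^*)|\bigr)\bigl(|R(\alpha_2)|\,|R(\alpha_2^*)|\bigr) = \binom{n}{n_1}\,n_1!\,n_2! = n!,$$
using the elementary identity $\binom{n}{n_1}n_1!\,n_2! = n!$. The parallel case $\alpha = \alpha_1 \parallel \alpha_2$ is identical with the roles of $\alpha$ and $\alpha^*$ exchanged: the binomial factor now sits on the $\alpha$-side and the dual is series, but the same regrouping and cancellation applies. The base case $\Psi = x_i$ is immediate, as a singleton order is self-dual with $|R(\alpha)| = |R(\alpha^*)| = 1 = 1!$.

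The argument is essentially frictionless once the reduction to the product formula is in place, so I do not expect a deep obstacle; the only point deserving care is the bookkeeping that makes the cancellation clean. Exactly one of $\alpha$ and $\alpha^*$ carries a top-level parallel composition, so exactly one factor $\binom{n}{n_1}$ enters the product, and it is precisely the factor needed to glue $n_1!$ and $n_2!$ into $n!$. I would also flag explicitly that $|\alpha_i^*| = |\alpha_i| = n_i$, since dualization does not move elements between components; this is what both licenses the inductive hypothesis on the dual components and guarantees that the binomial index $n_1$ matches on the $\alpha$ and $\alpha^*$ sides.
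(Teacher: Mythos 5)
Your proposal is correct and follows essentially the same route as the paper: reduce the additive identity to the product formula $|R(\alpha)|\,|R(\alpha^{*})| = n!$, then argue by structural induction on the SP-expression, using Lemma \ref{Le:|R|} and the fact that dualization flips the top-level operation so that exactly one binomial factor $\binom{n}{n_1}$ appears and cancels against $n_1!\,n_2!$. Your explicit remarks that $|\alpha_i^{*}| = |\alpha_i|$ and that the parallel case is symmetric are exactly the bookkeeping the paper's proof relies on (the paper writes out the series case and declares the parallel case analogous).
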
 
\begin{proof} Given a SP-partial order $\alpha$ determined by SP-expression $\Psi$, where $|\alpha| = n$. The proof proceeds by induction on the number of operations $k$ in the formula $\Psi$. It suffices to show that $ |R(\alpha)| |R(\alpha^{*})| = n!$ \\

\noindent a) {\bf Base case $k = 0$:} we must have $\Psi = x_i$ for some variable $x_i$ where $i \in \{1,\ldots,n\}$ and $\alpha = \alpha^{*}$ (the discrete order on $x_i$). 
Hence $H_{p} \times H_{q} = 1 \times  1 = 1 = n!$ \\

\noindent b) {\bf Case $k > 0$ and $\Psi = \Psi_1 \otimes \Psi_2$:} then $X = X_1 \cup X_2$, $X_1 \cap X_2 = \emptyset$, $X_1 = Var(\Psi_1)$ and $X_2 = Var(\Psi_2)$. Let $l_1 = |X_1|$, $l_2 = |X_2|$ and $n = l_1 + l_2 = |X|$, where $l_1,l_2 \geq 1$. With some abuse of notation, we denote $|R(\alpha)|$ by $R(\Psi)$. Note that $H_{p} \times H_{q} = |R(\alpha)| |R(\alpha^{*})| = |R(\Psi)| |R(\Psi^{*})|$, hence:
\begin{eqnarray*}
|R(\Psi)| |R(\Psi^{*})| &=&  |R(\Psi_1 \otimes \Psi_2)| |R(\Psi_1^{*} \parallel \Psi_2^{*})| \\
&=& |R(\Psi_1)||R(\Psi_2)| {n \choose l_1} |R(\Psi_1^{*})||R(\Psi_2^{*})| \\
&=&  {n \choose l_1}|R(\Psi_1)|  |R(\Psi_1^{*})| |R(\Psi_2)||R(\Psi_2^{*})| \\
&=& \frac{n!}{l_1! (n-l_1)!} l_1!l_2! \mbox{\hspace*{1 cm} (Induction hypothesis)} \\
&=& n!
\end{eqnarray*}

\noindent c) {\bf Case $k > 0$ and $\Psi = \Psi_1 \parallel \Psi_2$:} the proof of case c) proceeds similar to that of case b). 

\end{proof}

\begin{remark} Entropy is in a sense a measure of ``disorganization'' (degrees of freedom), hence in the context of global state preservation Theorem \ref{duality} can be interpreted as: \textbf{``quantitative order gained is proportional to positional order lost\footnote{Which may bear some relation to the messy office argument advocating a chaotic office where nothing is in the right place yet each item's place is known to the owner, over the case where each item is stored in the right order and yet the owner can no longer locate the items.}''}  We note that global state conservation for MOQA-computations \cite{sch1} means that the global states must have the same cardinality (as multisets) throughout the computation, hence $H_{p} + H_{q} = H_{max}$ remains constant throughout the computation when one global state is transformed into a second. 
\end{remark}

\section{Conclusion and future work}
We have established a ``denotational" version of entropy conservation for comparison-based sorting for which the sum of the entropies (positional and quantitative) remains constant when proceeding from the input to the output collection. We generalized entropy conservation to global states over SP-orders, for comparison-based computations transforming a global state into global state of the same cardinality (such as the operations of \cite{sch1}). The question remains whether there exists an ``operational" mechanism that not only transforms the underlying order $\alpha$ into $\alpha^{*}$ but \emph{also} transform the labeling $l$ into a ``dual labeling'' $l^{*}$ such that $(\alpha^{*},l^{*})$ forms once again a topological sort. This has been achieved for the case of sorting. Figure \ref{Fig:sorting-history-general-paired} illustrates that permutations $\sigma$, i.e. labelings of the discrete order are transformed into labelings $\sigma^{-1}$ of the dual order, i.e. the linear order. A general operational mechanism for comparison-based computation over SP-orders (using a suitable fragment of the MOQA language of \cite{sch1}) will be explored in future work, in which we establish an ``entropic duality theorem" coupling the computations with a dual computation \emph{per labeling}. The latter computation effects an increase in positional entropy proportional to the decrease in quantitative entropy effected by the original computation. We refer to algorithms satisfying this type of entropic coupling as ``diyatropic". Comparison-based algorithms form one of the most thoroughly studied classes in computer science \cite{knu}. Our results indicate that more aspects still remain to be explored in this area.

\end{document}